\definecolor{forestgreen}{RGB}{34,139,34}
\newtheorem{theorem}{Theorem}
\xpatchcmd{\proof}{\@addpunct{.}}{\@addpunct{:}}{}{}
\def\@hangfrom#1{\setbox\@tempboxa\hbox{{#1}}%
      \hangindent 0pt
      \noindent\box\@tempboxa}
\newcommand{\vast}{\bBigg@{3}}
\newcommand{\Vast}{\bBigg@{4}}
\newcommand*{\indep}{%
  \mathbin{%
    \mathpalette{\@indep}{}%
  }%
}
\newcommand*{\nindep}{%
  \mathbin{
    \mathpalette{\@indep}{\not}
  }%
}
\newcommand*{\@indep}[2]{%
  \sbox0{$#1\perp\m@th$}
  \sbox2{$#1=$}
  \sbox4{$#1\vcenter{}$}
  \rlap{\copy0}
  \dimen@=\dimexpr\ht2-\ht4-.2pt\relax
  \kern\dimen@
  {#2}%
  \kern\dimen@
  \copy0 
} 
\DeclareMathOperator{\E}{\textnormal{\mbox{E}}}
\def\@seccntformat#1{\@ifundefined{#1@cntformat}%
   {\csname the#1\endcsname\quad}  
   {\csname #1@cntformat\endcsname}
}
\let\oldappendix\appendix 
\renewcommand\appendix{%
    \oldappendix
    \newcommand{\section@cntformat}{\appendixname~\thesection\quad}
}
\begin{document}

\title{Interpretable meta-analysis of model or marker performance}

\author[1]{Jon A. Steingrimsson}
\author[2,3]{Lan Wen}
\author[1]{Sarah Voter}
\author[3-5]{Issa J. Dahabreh}

\affil[1]{Department of Biostatistics, Brown University School of Public Health, Providence, RI }
\affil[2]{ Department of Statistics and Actuarial Science, University of Waterloo, Waterloo, Ontario, Canada}
\affil[3]{CAUSALab, Harvard T.H. Chan School of Public Health, Boston, MA}
\affil[4]{Department of Epidemiology, Harvard T.H. Chan School of Public Health, Boston, MA}
\affil[5]{Department of Biostatistics, Harvard T.H. Chan School of Public Health, Boston, MA}

\maketitle{}

\clearpage

\vspace*{1in}

\begin{abstract}
\noindent
\linespread{1.3}\selectfont
Conventional meta analysis of model performance conducted using datasources from different underlying populations often result in estimates that cannot be interpreted in the context of a well defined target population. In this manuscript we develop methods for meta-analysis of several measures of model performance that are interpretable in the context of a well defined target population when the populations underlying the datasources used in the meta analysis are heterogeneous. This includes developing identifiablity conditions, inverse-weighting, outcome model, and doubly robust estimator. We illustrate the methods using simulations and data from two large lung cancer screening trials.
\end{abstract}

\clearpage

\section{Introduction}

Users of prediction models are often interested in using model derived predictions in a target population of substantive interest. When multiple studies evaluate the performance of the same prediction model (or biomarker), stakeholders are often interested in synthesizing evidence across the studies to learn about model performance in a target population of substantive interest (i.e.~performing meta analysis). Commonly used databases for meta analysis of model development and evaluation are randomized controlled trials, observational cohort studies, and electronic health records. But the populations underlying each datasource likely differ and they are usually not a random sample of the target population. For example, participants who agree to participate in clinical research are often younger \cite{hutchins1999underrepresentation}, healthier \cite{pinsky2007evidence}, and less diverse \cite{national2022improving} than the population that meet the eligibility criteria for participation, and observational cohort studies and administrative databases are often limited to specific healthcare systems or geographic regions.

Meta analysis of measures of model performance is usually conducted using weighted averages of study specific estimators of model performance or by modeling the distribution of measures of model performance across studies \cite{debray2017guide, debray2019framework}. But when prediction error modifiers, that is, variables that are associated with prediction error for a given measure of model performance and prediction model \cite{steingrimsson2021transporting}, are differentially distributed across the studies or between the target population and the study populations, estimates of measures of model performance obtained from standard meta analysis methods are not reflective of measures of model performance in the target population (and usually don't have a clear interpretation outside of the population underlying the pooled sample from the studies). The analogous problem occurs for meta-analysis of treatment effects and methods have been developed for synthesizing data from multiple randomized controlled trials that have a causal interpretation in a target population \cite{dahabreh2020toward, dahabreh2023efficient}. 

Several methods have been developed for transporting or generalizing both loss-based measures of model performance \cite{shimodaira2000improving, sugiyama2007covariate, sugiyama2012machine,morrison2023robust,sahoo2022learning,angelopoulos2023prediction, ge2023maximum} and estimators of area under the curve from a single study to a target population \cite{li2022estimating}.  The related literature on multi-source domain adaptation \cite{zhang2015multi,sun2015survey, zhao2020multi,nomura2021efficient,mansour2008domain, qiu2023efficient} usually focuses on risk estimation using samples from multiple source populations that perform well in a target population or are robust to distributional shifts between the source populations and the target population.

In this manuscript we develop identifiability results and propose novel estimators for measures of model performance in a target population, using data from multiple studies. We refer to this approach as interpretable meta analysis of model performance. The measures of model performance we consider are risk-based measures, sensitivity, specificity, negative predictive value, positive predictive value, and the area under the curve. We derive asymptotic properties of the estimators and evaluate finite sample properties of the estimators using simulations and apply them to analyzing data on lung cancer screening.

\section{Data and target parameters}

We assume that we have data from $K$ studies that we refer to as the source studies. For each source study participant, we assume that information is available on which study they come from $S = 1, \ldots, K$, covariate information $X$, and outcome information $Y$, where the outcome can be binary, count, or continuous. That is, the observed data from study $s$ is
\[
\{(X_i, S_i =s,  Y_i), 1, \ldots, n_s\},
\]
where $n_s$ is the number of observations in study $s \in \{1, \ldots, K\}$. We also assume that we have covariate information, but no outcome information, on a sample from the target population. Let $S=0$ denote an observation being from the target population and $n_0$ be the number of observations available from the target population. We denote $\mathcal{O}$ as the combined dataset from all the studies and the target population with $n = \sum_{k=0}^K n_k$ representing the total number of observations in the combined dataset. We assume that the observations are independent and identically distributed random variables, and define $R = I(S \neq 0)$ as an indicator of whether an observation comes from one of the studies ($R=1$) or the target population ($R=S=0$).

Throughout, we make the following two assumptions:
\begin{enumerate}
\item[A1] Conditional exchangeability between datasources. We assume that $Y \indep R|X$.
\item[A2] Positivity of participation in the collection of data sources: $\Pr[R =1|X = x] > 0$ for all $x$ such that $f(x|S=0)>0$.
\end{enumerate}
The positivity assumption A2 states that all covariate patterns that can occur in the target population have a positive probability of occurring in at least one of the studies. Hence, the methods developed can draw conclusions about a target population that has a broader spectrum than each of the individual studies.


Let $h(X, \widehat \beta)$ be a prediction model indexed through a parameter $\beta$ and let $h(X, \widehat \beta)$ be the estimated model where the parameter $\widehat \beta$ is an estimator for $\beta$. Throughout we do not make the assumption that the model is correctly specified. As such, the results presented in the rest of this paper apply to both correctly specified and misspecified models. We are interested in estimating performance of the model $h(X, \widehat \beta)$ on a dataset that is independent of the data used for model building (i.e.,~the data used to calculate $\widehat \beta$). 

For a binary outcome, common measures of model performance, especially when estimating diagnostic accuracy, are sensitivity and specificity. For a cut-point $c$, sensitivity in the target population is defined as
\[
\Pr[h(X, \widehat \beta) > c|Y = 1, S=0]
\]
and specificity in the target population is defined as
\[
\Pr[h(X, \widehat \beta) \leq c|Y = 0, S=0].
\]
That is, sensitivity in the target population is the proportion of the target population that have the disease ($Y=1$) that the model classifies as having the disease $(h(X, \widehat \beta) > c)$. And, specificity is the proportion of the target population that are disease free ($Y=0$) that the model classifies as disease free $(h(X, \widehat \beta) \leq  c)$. 

Both sensitivity and specificity condition on true disease status while positive predictive value and negative predictive value condition on the model derived classification. The positive predictive value in the target population is defined as $\Pr[Y=1|I(h(X, \widehat \beta) > c), S=0]$ and the negative predictive value in the target population is defined as $\Pr[Y=0|I(h(X, \widehat \beta) \leq c),S=0]$.

Plotting sensitivity vs $1-$specificity when varying the threshold $c$ gives the receiver operating characteristic (ROC) curve. The area under the ROC curve (AUC) provides a summary of sensitivity and specificity across thresholds. The AUC can be interpreted as the probability that a randomly selected observation that has the disease ($Y=1$) will have a higher model derived risk than a randomly selected observation without the disease ($Y=0$). Mathematically, the AUC in the target population is defined as
\[
\E[I(h(X_i, \widehat \beta) > h(X_j, \widehat \beta))|Y_i =1, Y_j = 0, S_i =0, S_j = 0]
\]
where the indices $i$ and $j$ denote random observations from the target population with $Y=1$ and with $Y=0$, respectively.

Loss-based measures of model performance such as the mean squared error, absolute loss, and the Brier score \cite{brier1950verification} are another class of commonly used measures. A loss function $L(Y, h(X, \widehat \beta))$ measures the discrepancy between the outcome $(Y)$ and the model derived predictions $(h(X, \widehat \beta))$. The risk (expected loss) in the target population is defined by $\E[L(Y, h(X, \widehat \beta))|S=0]$. In the main part of the manuscript we focus on estimating sensitivity and AUC and mostly present results for specificity, negative predictive value, positive predictive value, and loss-based measures of model performance in the Supplementary Web Appendix.

\section{Identifiability of measures of model performance}
\label{sec:id}

The following theorem shows that the sensitivity in the target population is identifiable using the observable data $\mathcal{O}$. A proof is given in Supplementary Web Appendix \ref{app:proofs}.
\begin{theorem}
\label{thm:id}
If assumptions A1 and A2 hold and $\E[\Pr[Y=1|X,R=1]|S=0]>0$, then the sensitivity in the target population is identifiable using the observed data through the observed data functional 
\begin{equation}
\label{id:out}
\frac{\E[I(h(X, \widehat \beta) > c) \Pr[Y = 1|X, R=1]|S=0]}{\E[\Pr[Y=1|X,R=1]S=0]}
\end{equation}
or equivalently using the weighting representation
\begin{equation}
\label{id:ipw}
\frac{\E\left[\frac{\Pr[R=0|X]}{\Pr[R=1|X]}I(h(X, \widehat \beta) > c, Y=1, R=1) \right]}{\E\left[\frac{\Pr[R=0|X]}{\Pr[R=1|X]}I(Y=1,R=1)\right]}.
\end{equation}
\end{theorem}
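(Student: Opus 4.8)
The plan is to establish the outcome-model functional \eqref{id:out} first, by a direct conditioning argument, and then to verify that the weighting functional \eqref{id:ipw} coincides with \eqref{id:out} through a change-of-measure identity. Throughout I would treat $\widehat \beta$ as fixed (the evaluation data are independent of the data used to fit the model), so $h(X,\widehat \beta)$ is simply a measurable function of $X$; in particular, nothing in the argument uses whether $h$ is correctly specified.

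First I would note that the target-population conditioning event $S=0$ is the same as $R=0$, and rewrite the sensitivity as the ratio
\[
\Pr[h(X,\widehat \beta) > c \mid Y=1, R=0] = \frac{\E[I(h(X,\widehat \beta)>c)\,I(Y=1)\mid R=0]}{\E[I(Y=1)\mid R=0]}.
\]
Applying the law of iterated expectations and conditioning on $X$ inside both the numerator and the denominator replaces $I(Y=1)$ by $\Pr[Y=1\mid X, R=0]$. By assumption A1, $Y \indep R \mid X$, so $\Pr[Y=1\mid X, R=0] = \Pr[Y=1\mid X, R=1]$ wherever the latter is defined, and assumption A2 guarantees $\Pr[R=1\mid X=x]>0$ on the support of $X$ given $S=0$, so this conditional probability is well-defined almost surely under $R=0$. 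Substituting gives exactly \eqref{id:out}, and the extra hypothesis $\E[\Pr[Y=1\mid X,R=1]\mid S=0]>0$ ensures the denominator is nonzero, so the ratio is a bona fide functional of the observed-data law (and simultaneously guarantees that the sensitivity is well-defined in the target population).

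Next I would show \eqref{id:ipw} equals \eqref{id:out}. Starting from the numerator of \eqref{id:ipw}, I condition on $X$ and use $\E[I(Y=1)\,I(R=1)\mid X] = \Pr[Y=1\mid X,R=1]\,\Pr[R=1\mid X]$, which cancels the $\Pr[R=1\mid X]$ in the denominator of the weight and leaves $\E\bigl[\Pr[R=0\mid X]\,I(h(X,\widehat \beta)>c)\,\Pr[Y=1\mid X,R=1]\bigr]$. Then the elementary identity $\E[\Pr[R=0\mid X]\,g(X)] = \Pr[R=0]\,\E[g(X)\mid R=0]$, which follows from $\Pr[R=0\mid X] = \E[I(R=0)\mid X]$ and iterated expectations, rewrites this as $\Pr[R=0]\,\E[I(h(X,\widehat \beta)>c)\,\Pr[Y=1\mid X,R=1]\mid R=0]$. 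Applying the same manipulation to the denominator of \eqref{id:ipw} produces $\Pr[R=0]\,\E[\Pr[Y=1\mid X,R=1]\mid R=0]$; the common factor $\Pr[R=0]$ cancels, and since $R=0$ is $S=0$ we recover \eqref{id:out}.

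I expect there to be no real obstacle here — this is a standard transportability-style identification argument — and that the only delicate points are bookkeeping around positivity: confirming that the inner conditional probability $\Pr[Y=1\mid X,R=1]$ and the inverse weight $\Pr[R=0\mid X]/\Pr[R=1\mid X]$ are well-defined on the relevant supports (handled by A2, noting also that on the event $R=1$ one has $\Pr[R=1\mid X]>0$ almost surely), and that the denominators of both representations are strictly positive (handled by the stated moment condition). Everything else is the tower property.
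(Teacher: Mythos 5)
Your proposal is correct and follows essentially the same route as the paper's proof: the outcome-model representation is obtained by the identical tower-property-plus-exchangeability argument, and your verification that \eqref{id:ipw} equals \eqref{id:out} uses exactly the same identities (factoring $\Pr[Y=1,R=1\mid X]=\Pr[Y=1\mid X,R=1]\Pr[R=1\mid X]$ and $\Pr[R=0\mid X]=\E[I(R=0)\mid X]$) that the paper chains together, merely run in the reverse direction. Your added remarks on where A2 and the moment condition are actually needed are accurate and, if anything, slightly more careful than the paper's own bookkeeping.
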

In Supplementary Web Appendix \ref{app:proofs} we prove the following identifiability result for the AUC.
\begin{theorem}
\label{thm-id-auc}
Assume that assumptions A1 and A2 hold and $\normalfont\E[\Pr[Y=1| X_i,R=1] (1 - \Pr[Y=1| X_j,R=1])|S_i=0, S_j=0] >0$, where $i$ is a random observation from the target population with $Y=1$ and $j$ is a random observation from the target population with $Y=0$. Then the AUC in the target population is identifiable using the observed data functional
\begin{equation}
\label{AUC-ID-O}
\frac{ {\normalfont\E} [\Pr[Y_k=1|R_k=1,X_k] \Pr[Y_l=0|R_l=1,X_l] I(h( X_k,  \widehat \beta) > h( X_l,  \widehat \beta))| S_k=0, S_l=0]}{{\normalfont\E}[\Pr[Y_k=1|R_k=1,X_k] \Pr[Y_l=0|R_l=1,X_l]|S_k=0, S_l=0]},  
\end{equation}
or equivalently using the weighting representation
\begin{equation} 
\label{AUC-ID-W}
\frac{{\normalfont\E}\left[\frac{\Pr[R=0|X_k]}{\Pr[R=1|X_k]}\frac{\Pr[R=0|X_l]}{\Pr[R=1|X_l]} I(h( X_k,  \widehat \beta) > h( X_l,  \widehat  \beta), Y_k=1, Y_l=0, R_k=1, R_l=1)\right]}{{\normalfont\E}\left[\frac{\Pr[R=0|X_k]}{\Pr[R=1|X_k]}\frac{\Pr[R=0|X_l]}{\Pr[R=1|X_l]}  I(Y_k=1, Y_l=0, R_k=1, R_l=1)\right]}.
\end{equation}
Here, $k$ is a random observation from the target population with $Y=1$ and $l$ is a random observation from the target population with $Y=0$.
\end{theorem}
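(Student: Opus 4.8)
The plan is to mirror the proof of Theorem~\ref{thm:id}, executed over a pair of independent target-population observations rather than a single one. First I would use the definition of conditional expectation within the event $\{S_i=0, S_j=0\}$ to write the target AUC as
\[
\frac{\E\big[I(h(X_i,\widehat\beta) > h(X_j,\widehat\beta))\, I(Y_i=1)\, I(Y_j=0) \mid S_i=0, S_j=0\big]}{\E\big[I(Y_i=1)\, I(Y_j=0)\mid S_i=0, S_j=0\big]}.
\]
Since $(X_i,Y_i,S_i)$ and $(X_j,Y_j,S_j)$ are i.i.d., conditioning on $(X_i,X_j)$ in both the numerator and the denominator and applying the tower property factors each outcome indicator into its own conditional mean, $\Pr[Y_i=1\mid X_i,S_i=0]$ and $\Pr[Y_j=0\mid X_j,S_j=0]$, while the comparison indicator $I(h(X_i,\widehat\beta)>h(X_j,\widehat\beta))$, being a function of $(X_i,X_j)$ only, is left untouched.

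Next I would invoke A1: since $Y\indep R\mid X$ and $\{S=0\}=\{R=0\}$, we have $\Pr[Y=1\mid X,S=0]=\Pr[Y=1\mid X,R=1]$ and likewise for $\Pr[Y=0\mid\cdot]$; this is the step that replaces the unobserved target-population outcome regression by the outcome regression estimable from the studies. Substituting turns the numerator into $\E[I(h(X_k,\widehat\beta)>h(X_l,\widehat\beta))\Pr[Y_k=1\mid X_k,R=1]\Pr[Y_l=0\mid X_l,R=1]\mid S_k=0,S_l=0]$, and, using in addition that the product of two single-observation expectations equals the expectation of the product by independence, turns the denominator into $\E[\Pr[Y_k=1\mid X_k,R=1]\Pr[Y_l=0\mid X_l,R=1]\mid S_k=0,S_l=0]$; this is \eqref{AUC-ID-O}. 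Positivity (A2) ensures $\Pr[Y=1\mid X,R=1]$ is defined on the support of $f(x\mid S=0)$, and the assumed strict inequality makes the ratio finite.

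For the weighting representation, I would take a generic pair-indexed term $\E[\varphi(X_k,X_l)\,\psi_k(X_k,Y_k)\,\psi_l(X_l,Y_l)\mid S_k=0,S_l=0]$ and convert the conditioning on $\{S_k=0,S_l=0\}$ into an unconditional expectation restricted to $\{R_k=1,R_l=1\}$ by inverse-odds weighting: conditioning on $(X_k,X_l)$, writing $\Pr[R=0\mid X]=\E[I(R=0)\mid X]$, and using A1 so that the laws of $Y\mid X$ agree across $R$, one obtains the weights $\Pr[R=0\mid X_k]/\Pr[R=1\mid X_k]$ and $\Pr[R=0\mid X_l]/\Pr[R=1\mid X_l]$ together with a common factor $\Pr[R=0]^2$ (here $\Pr[R_k=0,R_l=0]=\Pr[R=0]^2$ by independence of the two observations). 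Applying this to both the numerator and the denominator of \eqref{AUC-ID-O}---with $\varphi$ the comparison indicator and $\psi_k,\psi_l$ the indicators $I(Y=1), I(Y=0)$---the $\Pr[R=0]^2$ factors cancel, producing \eqref{AUC-ID-W}.

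I expect the only real friction to be bookkeeping around the two-sample ($U$-statistic-like) structure: one must consistently use independence of the two observations whenever a product of expectations is split or recombined, and keep the coupling indicator $I(h(X_k,\widehat\beta)>h(X_l,\widehat\beta))$ inside the expectation conditional on $(X_k,X_l)$ throughout. No idea beyond those already used for Theorem~\ref{thm:id} should be required.
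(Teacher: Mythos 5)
Your proposal is correct and follows essentially the same route as the paper's proof: write the AUC as a ratio over the pair $(k,l)$, condition on $(X_k,X_l)$, use independence of the two observations to factor the joint outcome probability, invoke A1 to swap $S=0$ for $R=1$ in the outcome regressions, and then obtain the weighting form by re-expressing $\Pr[Y=1\mid X,R=1]$ as $\Pr[Y=1,R=1\mid X]/\Pr[R=1\mid X]$ and absorbing $I(S=0)$ into $\Pr[R=0\mid X]$ via the tower property. The bookkeeping points you flag (keeping the comparison indicator inside the conditional expectation, and the cancellation of the normalizing constants in the ratio) are exactly the ones the paper's displayed chain of equalities handles.
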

In Supplementary Web Appendix \ref{app:NPV} we provide identifiability results and estimators for loss-based measures of model performance, specificity, and negative and positive predictive value.

\section{Sampling framework} 

We adopt the sampling framework from causally interpretable meta-analysis for treatment effects \cite{dahabreh2023efficient}. That is, we assume that there is a single population stratified by $S$ and we assume that the data from each of the $K$ studies can be modeled as being a sample from some superpopulation that can be potentially ill-defined and hard to characterize. We also assume that we have access to a random sample from a well defined target population. We allow the sampling fractions for the studies and the target population to be unknown and potentially unequal. More specifically, we assume there are two sampling models i) a population sampling model which assumes we have a random sample from the single superpopulation and ii) a biased sampling model where the sampling is done stratified on $S$. If we denote densities from the population sampling model using $p(\cdot)$ and densities from the biased sampling model using $q(\cdot)$, then the assumptions made imply  $p(y|x, r=1) = q(y|x, r=1)$ and $p(x|r=0) = q(x|r=0)$. And as the probabilities in the identifiability expressions \eqref{id:out} and \eqref{AUC-ID-O} only rely on the densities $p(y|x, r=1)$ and $p(x|r=0)$, they can be interpreted as being derived from the biased sampling model, with the expectations and probabilities integrated with respect to densities from the biased sampling model. As the biased sampling model is a more realistic representation of how the data is collected, we will work under the biased sampling setting.

\section{Estimation of measures of model performance in the target population}
\label{sec:est}

Sample analogs of expression \eqref{id:out} give the following outcome (or g-formula like \cite{robins1986new}) estimator for sensitivity in the target population:
\begin{equation}
\label{sens-out}
\widehat \psi_{sens, out} = \frac{\sum_{i=1}^n I(R_i = 0) I(h(X_i, \widehat \beta) > c) \widehat m(X_i)}{\sum_{i=1}^n I(R_i=0)  \widehat m(X_i)},
\end{equation}
where $\widehat m(X)$ is an estimator for $\Pr[Y = 1|X, R=1]$.

The sample analogs of expression \eqref{id:ipw} gives the following weighting estimator:
\begin{equation}
\label{sens-w}
\widehat \psi_{sens, w} =\frac{\sum_{i=1}^n I(h(X_i, \widehat \beta) > c, Y_i=1, R_i =1) \widehat w(X_i)}{\sum_{i=1}^n I(Y_i=1, R_i =1) \widehat w(X_i)},
\end{equation}
where $\widehat w(X)$ is an estimator for $\frac{\Pr[R=0|X]}{\Pr[R=1|X]}$. In Supplementary Web Appendix \ref{app:send-inf} we derive the non-parametric influence function for sensitivity in the target population. This influence function based ``doubly robust" estimator is given by
\[
\widehat \psi_{sens, dr} = \frac{\sum_{i=1}^n \left(I(S_i=0) I(h(X_i, \widehat \beta) > c) \widehat m(X_i) +  \widehat w(X_i) I(R_i=1) I(h(X_i, \widehat \beta) > c) \big\{ I(Y_i = 1) - \widehat m(X_i)\big\}\right)}{\sum_{i=1}^n \left(I(S_i=0) \widehat m(X_i) +  \widehat w(X_i) I(R_i=1) \big\{ I(Y_i = 1) - \widehat m(X_i)\big\}\right)}
\]
In Supplementary Web Appendix \ref{app:asym-prop} we proof the following theorem.
\begin{theorem}
\label{thm-sens-asymp}
 If either $\widehat m(X) \overset{P}{\longrightarrow}\Pr[Y = 1|X, R=1]$ or $\widehat w(X) \overset{P}{\longrightarrow}\frac{\Pr[R=0|X]}{\Pr[R=1|X]}$, then $\widehat \psi_{sens, dr}$ is consistent (i.e.,~$\widehat \psi_{sens, dr} \overset{P}{\longrightarrow} \psi_{sens}$).
\end{theorem}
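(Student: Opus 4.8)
The plan is to treat $\widehat\psi_{sens,dr}$ as a ratio estimator and prove consistency in two steps: first replace the fitted nuisances $\widehat m,\widehat w$ by their probability limits, then run the algebra showing the resulting population ratio equals the identifying functional \eqref{id:out}. Write $\widehat\psi_{sens,dr}=\widehat N/\widehat D$, and for generic candidate nuisances $m,w$ let $U(m,w)$ and $V(m,w)$ denote a generic summand of the numerator and of the denominator of $\widehat\psi_{sens,dr}$ with $\widehat m,\widehat w$ replaced by $m,w$, so that $\widehat N=n^{-1}\sum_i U_i(\widehat m,\widehat w)$ and $\widehat D=n^{-1}\sum_i V_i(\widehat m,\widehat w)$. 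Suppose $\widehat m\overset{P}{\longrightarrow}m^*$ and $\widehat w\overset{P}{\longrightarrow}w^*$ in a suitable norm (e.g.\ $L_2(P)$ on the relevant support), where by hypothesis $m^*=m_0:=\Pr[Y=1|X,R=1]$ or $w^*=w_0:=\Pr[R=0|X]/\Pr[R=1|X]$ (the other limit, if not known to be correct, being some fixed bounded function). Under standard regularity -- boundedness of the fitted weights with probability tending to one, plus a Glivenko--Cantelli/Donsker-type entropy condition on the nuisance classes, or cross-fitting, so the empirical-process remainder is $o_P(1)$ -- we obtain $\widehat N\overset{P}{\longrightarrow}N^*:=\E[U(m^*,w^*)]$ and $\widehat D\overset{P}{\longrightarrow}D^*:=\E[V(m^*,w^*)]$, whence $\widehat\psi_{sens,dr}\overset{P}{\longrightarrow}N^*/D^*$ by the continuous mapping theorem, provided $D^*\neq0$.

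The core is then to show that in both scenarios $N^*=\E[I(S=0)I(h(X,\widehat\beta)>c)m_0(X)]$ and $D^*=\E[I(S=0)m_0(X)]$, so that $N^*/D^*=\E[I(h(X,\widehat\beta)>c)m_0(X)|S=0]/\E[m_0(X)|S=0]$; this is exactly the functional \eqref{id:out}, which by Theorem~\ref{thm:id} (invoking A1, A2 and the condition $\E[\Pr[Y=1|X,R=1]|S=0]>0$) equals $\psi_{sens}$, and the same identity gives $D^*=\Pr[S=0]\,\E[\Pr[Y=1|X,R=1]|S=0]>0$ so the ratio is well defined. If $m^*=m_0$, conditioning each augmentation term on $(X,R)$ and using $\E[I(Y=1)-m_0(X)|X,R=1]=0$ annihilates it in both $N^*$ and $D^*$, leaving only the first term. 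If $w^*=w_0$, I would instead iterate over $X$: since $\E[I(R=1)|X]=\Pr[R=1|X]$ and hence $w_0(X)\Pr[R=1|X]=\Pr[R=0|X]=\E[I(S=0)|X]$, the $m^*$-part of the augmentation term exactly cancels the first term of $N^*$ (resp.\ $D^*$), while the leftover piece $\E[w_0(X)I(R=1)I(h(X,\widehat\beta)>c)I(Y=1)]$ reduces, after replacing $\E[I(Y=1)|X,R=1]$ by $m_0(X)$, to $\E[I(S=0)I(h(X,\widehat\beta)>c)m_0(X)]$. Both cases land on the same limit, so $\widehat\psi_{sens,dr}\overset{P}{\longrightarrow}\psi_{sens}$. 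Equivalently, one can phrase this as the double robustness of the estimating function $U(m,w)-\psi\,V(m,w)$, whose expectation vanishes at $\psi=\psi_{sens}$ as soon as one nuisance limit is correct, and then appeal to a standard Z-estimation argument.

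I expect step one to be the main obstacle: because $\widehat m$ and $\widehat w$ are random functions fitted on the same data, $n^{-1}\sum_i U_i(\widehat m,\widehat w)$ is not an average of i.i.d.\ summands, so the passage to $\E[U(m^*,w^*)]$ must be justified -- either by entropy/Donsker-type control of the nuisance estimators together with uniform boundedness of $\widehat w$ (to handle the empirical-process term and the continuity of $\E[U(m,w)]$ in $(m,w)$), or by cross-fitting, which removes the dependence altogether. Once that remainder is controlled, the rest of the argument is the elementary conditioning computation above, whose only delicate point is choosing the right order of conditioning ($(X,R)$ when the outcome model is correct, $X$ when the weighting model is correct).
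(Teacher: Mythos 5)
Your proposal is correct and follows essentially the same route as the paper's proof: write $\widehat\psi_{sens,dr}$ as a ratio, pass to the limits $m^*,w^*$ via the law of large numbers and the continuous mapping theorem, and then handle the two cases by iterated expectations --- conditioning on $(X,R)$ to kill the augmentation term when $m^*=\Pr[Y=1\mid X,R=1]$, and conditioning on $X$ to cancel the $m^*$-terms when $w^*=\Pr[R=0\mid X]/\Pr[R=1\mid X]$ --- before invoking Theorem~\ref{thm:id}. If anything you are more careful than the paper, which asserts the convergence of the averages with estimated nuisances to their population limits without stating the uniform-convergence/equicontinuity conditions you correctly flag (the paper imposes such conditions only for the AUC result in Theorem~\ref{thm-auc-asymp}).
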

So far we have assumed that the cut-point $c$ is provided, but it is often data-dependent. For example, the Youden index \cite{youden1950index} is a way to select cut-points by maximizing the sum of sensitivity and specificity. Using the methods developed, the Youden index can easily be extended to perform cut-point selection in the target population. Let $\widehat \psi_{spec, out}(c)$ be the outcome model estimator for specificity in the target population provided in the Supplementary Web Appendix, then the outcome model estimator for the optimal cut-off point is $\widehat c_{out} = \underset{c}{\max} \hspace{2pt} (\widehat \psi_{sens, out}(c) + \widehat \psi_{spec, out}(c) -1)$ and the analogous versions using weighting or doubly robust estimators can also be used. Here, for each evaluation measure the notation $(c)$ denotes that the evaluation measure is calculated using the cut-point $c$.

We use the notation in \cite{li2022estimating} to define estimators for the AUC in the target population. For a function $k(X_i,X_j)$, define 
$d^{out} (O_i, O_j; k(X_i,X_j)) = \widehat m(X_i) (1-\widehat m(X_j))   I(S_i=0, S_j=0) k(X_i,X_j).$ The outcome model estimator for the area under the curve in the target population is given by
\begin{equation} \label{g-auc}  
\widehat \psi_{auc, out}  = \frac{\sum_{i\neq j} d^{out}(O_i, O_j; I(h(X_i, \widehat \beta) > h(X_j, \widehat \beta))}{\sum_{i\neq j} d^{out}(O_i, O_j; 1)}.
\end{equation}
Similarly, for a function $k(X_i,X_j)$, define $d^{w}(O_i, O_j; k(X_i,X_j)) = \widehat w(X_i) \widehat w(X_j) I(Y_i=1, Y_j=0, R_i=1, R_j=1) k(X_i,X_j).$
The weighting-based estimator for the AUC in the target population is given by
\begin{equation} \label{IOW-AUC}  
\widehat \psi_{auc, w}  = \frac{\sum_{i\neq j} d^{w}(O_i, O_j; I(h(X_i, \widehat \beta) > h(X_j, \widehat \beta))}{\sum_{i\neq j} d^{w}(O_i, O_j; 1)}.
\end{equation}
Lastly, we define the doubly robust estimator for the AUC in the target population as
\begin{equation} \label{dr}
\widehat \psi_{auc,dr}
= \frac{ \sum_{i \neq j}  d^{dr} (O_i, O_j; k(X_i,X_j) = I(h(X_i, \widehat \beta) > h(X_j, \widehat \beta))}{  \sum_{i \neq j} d^{dr}(O_i, O_j; k(X_i,X_j)= 1)},
\end{equation}
where 
\begin{equation} \label{sums}
\begin{split}d^{dr}(O_i, O_j; k(X_i,X_j)) &= d^{w}(O_i, O_j; k(X_i,X_j)) +  d^{out} (O_i, O_j; k(X_i,X_j)) -\\
&\widehat w(X_i) \widehat w(X_j) \widehat m(X_i) (1 - \widehat m(X_j))   I(R_i=1, R_j=1) k(X_i,X_j).
 \end{split}
\end{equation}
The following theorem shows that the doubly robust estimator for the AUC in the target population is doubly robust and a proof is given in the Supplementary Web Appendix.
\begin{theorem}
\label{thm-auc-asymp}
 For a random variable $W$ define $\mathbb{G}_n(W) = \sqrt{n} \left(\frac{1}{n} \sum_{i=1}^n W_i - E[W]\right)$. We assume that the models $\widehat m(X)$ and $\widehat w(X)$ are parametric and indexed by parameters $\theta_1$ and $\theta_2$, respectively, and that $\left\{\mathbb{G}_n (m(X;\theta_1)): \theta_1 \in \Theta_1\right\}$ and $\left\{\mathbb{G}_n (m(X;\theta_2)): \theta_2 \in \Theta_2\right\}$ are stochastically equicontinuous where $\Theta_j$ is the parameter space for $\theta_j, j=1,2$. If either $\widehat m(X) \overset{P}{\longrightarrow}\Pr[Y = 1|X, R=1]$ or $\widehat w(X) \overset{P}{\longrightarrow}\frac{\Pr[R=0|X]}{\Pr[R=1|X]}$, then $\widehat \psi_{auc, dr}$ is consistent (i.e.,~$\widehat \psi_{auc, dr} \overset{P}{\longrightarrow} \psi_{auc}$).
\end{theorem}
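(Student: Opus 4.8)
The plan is to argue conditionally on the model-development data (which is independent of the estimation sample $\mathcal{O}$), so that $h(\cdot,\widehat\beta)$ is a fixed function, and to write $\widehat\psi_{auc,dr}=\widehat N_n/\widehat D_n$ with $\widehat N_n=\{n(n-1)\}^{-1}\sum_{i\neq j}d^{dr}\big(O_i,O_j;I(h(X_i,\widehat\beta)>h(X_j,\widehat\beta))\big)$ and $\widehat D_n=\{n(n-1)\}^{-1}\sum_{i\neq j}d^{dr}(O_i,O_j;1)$. By Slutsky's theorem it suffices to show that $\widehat N_n$ and $\widehat D_n$ converge in probability to constants $N$ and $D$ with $D\neq0$ and that $N/D$ equals the observed-data functional \eqref{AUC-ID-O} in the case where $\widehat m$ is consistent and \eqref{AUC-ID-W} in the case where $\widehat w$ is consistent; either way Theorem~\ref{thm-id-auc} gives $N/D=\psi_{auc}$, and $D\neq0$ falls out of the computation below together with the stated condition $\E[\Pr[Y=1|X_i,R=1](1-\Pr[Y=1|X_j,R=1])\mid S_i=0,S_j=0]>0$ and $\Pr[S=0]>0$.

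The main step is to replace the estimated nuisance functions by their limits. Writing $U_n(\theta_1,\theta_2;k)$ for the U-statistic obtained with $\widehat m(\cdot)=m(\cdot;\theta_1)$, $\widehat w(\cdot)=w(\cdot;\theta_2)$, letting $\theta^\ast=(\theta_1^\ast,\theta_2^\ast)$ be the probability limit of $(\widehat\theta_1,\widehat\theta_2)$ — so, by hypothesis, $m(\cdot;\theta_1^\ast)=\Pr[Y=1|X,R=1]$ in the first case and $w(\cdot;\theta_2^\ast)=\Pr[R=0|X]/\Pr[R=1|X]$ in the second, with the remaining limit some possibly misspecified $\bar m$ or $\bar w$ — and setting $\mu(\theta_1,\theta_2;k)=\E[d^{dr}_{\theta_1,\theta_2}(O_1,O_2;k)]$, I would decompose $U_n(\widehat\theta;k)-\mu(\theta^\ast;k)=\{U_n(\widehat\theta;k)-\mu(\widehat\theta;k)\}+\{\mu(\widehat\theta;k)-\mu(\theta^\ast;k)\}$. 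The second bracket is $o_P(1)$ by continuity of $\mu$ in $\theta$ and consistency of $\widehat\theta$; the first follows from a uniform-in-$\theta$ law of large numbers for the U-process $U_n(\cdot;k)$ via its Hoeffding decomposition — the H\'ajek projection is an ordinary empirical process in $\theta$ whose oscillation is controlled by the assumed stochastic equicontinuity (using the implicit boundedness $m\in[0,1]$ and a uniform-positivity strengthening of A2 so that $w$ is bounded), and the degenerate part is $O_P(n^{-1})$ uniformly — combined with the pointwise U-statistic LLN. Hence $\widehat N_n$ and $\widehat D_n$ converge in probability to $\mu(\theta^\ast;I(h(X_1,\widehat\beta)>h(X_2,\widehat\beta)))$ and $\mu(\theta^\ast;1)$, respectively. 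I expect this to be the main obstacle, since $\widehat\psi_{auc,dr}$ is a ratio of second-order U-processes whose non-symmetric kernels are products of three estimated functions.

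It then remains to evaluate $\mu(\theta^\ast;k)$ and read off the cancellation. Put $m_0(X)=\Pr[Y=1|X,R=1]$, $w_0(X)=\Pr[R=0|X]/\Pr[R=1|X]$, and use that $I(S=0)=I(R=0)$, that $O_1,O_2$ are independent, and that by A1 $\E[I(Y=1)\mid X,R=1]=m_0(X)$. If $\widehat m$ is consistent (so $\bar m=m_0$), conditioning on $(X_1,X_2,R_1,R_2)$ yields $\E[\bar w(X_1)\bar w(X_2)I(Y_1=1,Y_2=0,R_1=1,R_2=1)k]=\E[\bar w(X_1)\bar w(X_2)m_0(X_1)(1-m_0(X_2))I(R_1=1,R_2=1)k]$, which is precisely the expectation of the correction term in \eqref{sums}; these cancel, leaving $\mu(\theta^\ast;k)=\E[m_0(X_1)(1-m_0(X_2))I(S_1=0,S_2=0)k]=\Pr[S=0]^2\,\E[m_0(X_1)(1-m_0(X_2))k\mid S_1=0,S_2=0]$, so $\widehat N_n/\widehat D_n$ converges to \eqref{AUC-ID-O}. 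If instead $\widehat w$ is consistent (so $\bar w=w_0$), conditioning on $(X_1,X_2)$ and using $w_0(X)\Pr[R=1|X]=\Pr[R=0|X]$ yields $\E[w_0(X_1)w_0(X_2)\bar m(X_1)(1-\bar m(X_2))I(R_1=1,R_2=1)k]=\E[\bar m(X_1)(1-\bar m(X_2))I(S_1=0,S_2=0)k]$, i.e.\ the expectation of the correction term equals that of the outcome term; these cancel, leaving $\mu(\theta^\ast;k)=\E[w_0(X_1)w_0(X_2)I(Y_1=1,Y_2=0,R_1=1,R_2=1)k]$, so $\widehat N_n/\widehat D_n$ converges to \eqref{AUC-ID-W}. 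In both cases the denominator limit equals $\Pr[S=0]^2\,\E[m_0(X_1)(1-m_0(X_2))\mid S_1=0,S_2=0]>0$ by the stated moment condition, and Theorem~\ref{thm-id-auc} finishes the proof; verifying continuity of $\mu$, the pointwise U-statistic LLN, and the degenerate-part bound are routine given boundedness of the kernels.
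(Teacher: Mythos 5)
Your proposal is correct and follows essentially the same route as the paper: both reduce the problem to consistency of the numerator and denominator U-statistics via the continuous mapping theorem, establish the double-robustness through the same iterated-expectation cancellations (the correction term in \eqref{sums} cancels the weighting term when $\widehat m$ is consistent and the outcome term when $\widehat w$ is consistent), and invoke stochastic equicontinuity to handle the plug-in nuisance estimates. The only difference is presentational: the paper outsources the uniform U-process law of large numbers and the consistency of $\widehat\psi_{auc,w}$ and $\widehat\psi_{auc,out}$ to Lemmas 1--2 and Theorems 1--2 of the cited reference, whereas you sketch the Hoeffding-decomposition argument directly.
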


\section{Relaxing the conditional exchangeability assumption using knowledge of the marginal prevalence rate in the target population}

So far we have assumed that the conditional exchangeability assumption $Y \indep R|X$ holds. Now assume that this assumption is violated ($ Y \mathlarger{\nindep} R | X$) and therefore for at least one $s' \in \{1, \ldots, K\}$ we have $f_{Y | X, S}(y | x,  s = 0) \neq f_{Y | X,S}(y | x, s = s')$. To relax that assumption we assume for each study that the relationship between the population underlying the study sample and the target population can be expressed through an exponential tilt model \cite{scharfstein2018globalBiometrics, scharfstein2018globalSMMR,scharfstein2021global, dahabreh2022global}
\begin{equation}\label{model_exponential_tilt}
	f_{Y | X, S}(y | x, s = 0) \propto e^{ \gamma_{s'} y} f_{Y | X,  S}(y| x , s = s'), s' \in \{1, \ldots, K\}
\end{equation} 
for some sensitivity parameter $\gamma_{s'} \in \mathbb R$, for $s' \in \{1, \ldots, K\}$. As outcome information is unavailable in the target population, $\gamma_{s'}$ cannot be estimated using the observed data and hence needs to be specified a priori. If $\gamma_{s'} = 0$ for all $s' \in \{1, \ldots, K\}$, then it can be shown that the  conditional exchangeability assumption $Y \indep R|X$ holds. 

As $f_{Y | X, S}(y | x, s = 0)$ is a density, it follows that for all $s' \in \{1, \ldots, K\}$ 
\begin{equation*}
  \begin{split}
  f_{Y | X, S}( y | x, 0) 
      &= \dfrac{ e^{\gamma_{s'} y} f_{Y | X, S}(y| x, s') }{ \E [ e^{\gamma_{s'} y} | X = x, S = s']}
  \end{split}
\end{equation*} 
which implies
\begin{equation}
\label{test-sens}
\dfrac{ e^{\gamma_{1} y} f_{Y | X, S}(y| x, 1) }{ \E [ e^{\gamma_{1} y} | X = x, S = 1]} =\dfrac{ e^{\gamma_{2} y} f_{Y | X, S}(y| x, 2) }{ \E [ e^{\gamma_{2} y} | X = x, S = 2]} = \ldots = \dfrac{ e^{\gamma_{K} y} f_{Y | X, S}(y| x, K) }{ \E [ e^{\gamma_{K} y} | X = x, S = K]}.
\end{equation}
As outcome information is available in all the studies, all quantities appearing in the equation above only rely on the observed data and hence is testable (e.g.,~using test of equality of distributions \cite{luedtke2019omnibus}). However, rejecting the null hypothesis that all the densities in equation \eqref{test-sens} are equal does not indicate which, if any, of the studies satisfies equation \eqref{model_exponential_tilt}. And, if the test of equality of densities in Equation \eqref{test-sens} does not provide evidence against the equality of densities in equation \eqref{test-sens}, and we believe that Equation \eqref{test-sens} holds, that does not imply that assumption of the exponential tilt model \eqref{model_exponential_tilt} is satisfied.

In Supplementary Web Appendix \ref{app:sens} we proof the following theorem.
\begin{theorem}
Under the exponential tilt model, the sensitivity in the target population when transporting from study $s'$ is identifiable using the observed data functional 
\begin{equation}
\label{sens-sens}
 \psi_{sens, out}(\gamma_{s'}) = \frac{\E \left[\frac{\E[ I(h(X, \widehat \beta) > c, Y=1) e^{\gamma_{s'} y}|X, S=s']}{\E[e^{\gamma_{s'} y}|X, S=s']} \Bigg|S=0\right]}{\E \left[\frac{\E[ I(Y=1) e^{\gamma_{s'} y}|X, S=s']}{\E[e^{\gamma_{s'} y}|X, S=s']} \Bigg|S=0\right]}.
\end{equation}
\end{theorem}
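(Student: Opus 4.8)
The plan is to follow the same template as the proof of Theorem~\ref{thm:id}, but replace the conditional-exchangeability step by the change of measure induced by the exponential tilt model \eqref{model_exponential_tilt}. As in the rest of the paper I treat $\widehat\beta$ as fixed (it is computed from an independent sample, so conditionally $h(\cdot,\widehat\beta)$ is a deterministic function of $X$), and I write the target-population sensitivity as a ratio,
\[
\Pr[h(X,\widehat\beta)>c \mid Y=1, S=0] = \frac{\Pr[h(X,\widehat\beta)>c,\, Y=1 \mid S=0]}{\Pr[Y=1 \mid S=0]},
\]
handling numerator and denominator separately by iterating expectations over $X$ given $S=0$. Since $I(h(X,\widehat\beta)>c)$ is $\sigma(X)$-measurable,
\[
\Pr[h(X,\widehat\beta)>c,\, Y=1 \mid S=0] = \E\big[\, I(h(X,\widehat\beta)>c)\,\E[I(Y=1)\mid X, S=0]\ \big|\ S=0\,\big].
\]

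The key step rewrites $\E[\,\cdot\mid X,S=0]$ using the tilt. Because $f_{Y\mid X,S}(\cdot\mid x,0)$ integrates (sums) to one, \eqref{model_exponential_tilt} forces the normalization already recorded just above \eqref{test-sens},
\[
f_{Y\mid X, S}(y\mid x, 0) = \frac{e^{\gamma_{s'} y}\, f_{Y\mid X, S}(y\mid x, s')}{\E[e^{\gamma_{s'} Y}\mid X=x, S=s']},
\]
so that for any integrable $g$,
\[
\E[g(Y)\mid X=x, S=0] = \frac{\E[g(Y)\, e^{\gamma_{s'} Y}\mid X=x, S=s']}{\E[e^{\gamma_{s'} Y}\mid X=x, S=s']}.
\]
Taking $g(Y)=I(Y=1)$, pulling the $\sigma(X)$-measurable factor $I(h(X,\widehat\beta)>c)$ inside the inner $Y$-expectation (so $I(h>c)\,I(Y=1)=I(h>c,Y=1)$), and substituting back gives
\[
\Pr[h(X,\widehat\beta)>c,\, Y=1 \mid S=0] = \E\!\left[\frac{\E[I(h(X,\widehat\beta)>c, Y=1)\, e^{\gamma_{s'} Y}\mid X, S=s']}{\E[e^{\gamma_{s'} Y}\mid X, S=s']}\ \Bigg|\ S=0\right],
\]
which is the numerator of \eqref{sens-sens}. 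The denominator is the special case with $g(Y)=I(Y=1)$ and no indicator factor: $\Pr[Y=1\mid S=0]$ equals the denominator of \eqref{sens-sens}. Dividing, and using the implicit positivity-type condition ensuring that denominator is strictly positive, yields \eqref{sens-sens}; since every quantity in \eqref{sens-sens} is a functional of $f(x\mid S=0)$ and $f_{Y\mid X, S}(\cdot\mid x, s')$ — both identified from the observed data — identifiability follows.

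The argument is essentially bookkeeping, so I do not expect a serious obstacle. The only places needing care are: (i) the validity of the normalization identity for the tilted density, which needs $0<\E[e^{\gamma_{s'} Y}\mid X=x, S=s']<\infty$ and needs $f_{Y\mid X, S}(\cdot\mid x, s')$ to be well defined wherever $f(x\mid S=0)>0$ (a positivity condition for study $s'$, analogous to A2 but restricted to $S=s'$); and (ii) the commuting of the $\sigma(X)$-measurable indicator $I(h(X,\widehat\beta)>c)$ with the inner $Y$-expectation. For the binary outcome relevant to sensitivity, $e^{\gamma_{s'} Y}$ is bounded so (i) is automatic; the moment-generating-function condition $\E[e^{\gamma_{s'} Y}\mid X=x, S=s']<\infty$ is the nontrivial hypothesis one would need for the count and continuous analogues treated in the Supplementary Web Appendix.
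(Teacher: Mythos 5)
Your proof is correct and follows essentially the same route as the paper's: write sensitivity as a ratio, iterate expectations over $X$ given $S=0$, and use the normalized exponential tilt to convert $\E[\,\cdot\mid X,S=0]$ into the tilted expectation under $S=s'$. Your added remarks on positivity for study $s'$ and the boundedness of $e^{\gamma_{s'}Y}$ in the binary case are sensible refinements but do not change the argument.
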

Expression \eqref{sens-sens} suggests the following outcome model estimator for the sensitivity in the target population
\begin{equation}
\label{est-sens-sens}
\widehat \psi_{sens, out}(\gamma_{s'}) = \frac{\sum_{i=1}^n I(S_i=0) \widehat b_{s'}^1(X_i)}{\sum_{i=1}^n I(S_i=0) \widehat b_{s'}^0(X_i)},
\end{equation}
where $\widehat b_{s'}^j(X)$ is an estimator for
$\frac{\E[ I(h(X, \widehat \beta) > c)^j I(Y=1) e^{\gamma_{s'} y}|X, S=s']}{\E[e^{\gamma_{s'} y}|X, S=s']}$
that can be implemented by $\widehat b_{s'}^j(X) =  \dfrac{  I(h(X, \widehat \beta) > c)^j  e^{\gamma} \widehat m_{s'}(X) }{  1 + \widehat m_{s'}(X) (e^{\gamma} - 1)}$ for $j=\{0,1\}$,
where $\widehat m_{s'}(X)$ is an estimator for $\Pr[Y=1|X, S= s']$.

To implement $\widehat \psi_{sens, out}(\gamma_{s'})$ the sensitivity analysis parameter $(\gamma_{s'})$ needs to be selected and as outcome information is unavailable in the target population it cannot be estimated using the data. However if information on the marginal prevalence rate in the target population $E[Y|S=0]$ is available, then under the exponential tilt model
\begin{equation}
\label{eq-sens-sel}
  \int \int y  \dfrac{ e^{\gamma_{s'} y} f_{Y | X, S}(y| x, s') }{ \E [ e^{\gamma_{s'} y} | X = x, S = s']} f_{X|S}(x | 0) dy dx - E[Y|S=0] = 0
\end{equation}
and $\gamma_{s'}$ can be estimated as a solution to the sample analog of equation \eqref{eq-sens-sel}.

The estimators $\widehat \psi_{sens, out}(\gamma_{s'}), s' \in \{1, \ldots, K\}$ transport sensitivity at the individual study level to the same target population and are therefore estimating the same target parameter, provided that all relevant assumptions hold. Incompatibility of the data with the asymptotic equality of all $\widehat \psi_{sens, out}(\gamma_{s'})$ suggests that at least one assumption is violated (i.e.,~at least one of the estimators $\widehat \psi_{sens, out}(\gamma_{k})$ is biased but we cannot from the data determine which it is). If the investigator believes that all assumptions hold and the data does not provide evidence to the contrary, a natural question is how to combine $\widehat \psi_{sens, out}(\gamma_{s'}), s' \in \{1, \ldots, K\}$ to improve efficiency. Any linear combination $\sum_{i=1}^K \widehat a_k \widehat \psi_{sens, out}(\gamma_{k})$ is a consistent estimator for the sensitivity in the target population as long as the weights $\widehat a_k, k = 1, \ldots, K$ sum to one (here, the notation $\widehat a_k$ is used to denote that the weights can be data dependent). Hence, the main consideration when choosing the weights is efficiency. In standard meta analysis, selecting the weights $\widehat a_k$ proportional to the inverse variance of each estimator $\widehat \psi_{sens, out}(\gamma_{k})$ minimizes asymptotic variance \cite{zeng2015random}. However, this result does not hold when combining $\widehat \psi_{sens, out}(\gamma_{k})$ as they all use the same target population data and are therefore not independent \cite{steingrimsson2023systematically}. For each $s' \in \{1, \ldots, K\}$, equation \eqref{est-sens-sens} puts restrictions on the joint distribution of $(X,S,S \times Y)$ and as $\widehat \psi_{sens, out}(\gamma_{k})$ are estimated using different parts of the data there is no guarantee that in finite samples there exists a joint distribution that is compatible with all the estimators. However if all the assumptions hold and all the nuisance functions are correctly specified, then asymptotically the estimators are compatible \cite{robins2000sensitivity,dahabreh2022global}.

\section{Simulations}

To evaluate the finite sample performance of the estimators, we performed simulations comparing the outcome model estimator, the weighting estimator, and an estimator that estimates model performance using pooled data from the source studies, referred to as the source estimator. 

We simulated the covariate vector from a five dimensional multivariate normal distribution with mean zero and a covariance matrix where element $(i,j)$ is equal to $0.6^{|i-j|}$. The selection into any study ($R$) was simulated from a logistic regression model $R \sim Ber(\Pr[R=1|X])$ where
\[
\Pr[R=1|X] = expit(1 + 0.5 X_1 + 0.5 X_2 + 0.3 X_3 + 0.3 X_1^2 + 0.3 X_2^2 + 0.3 X_3^2),
\]
where $expit(x) = \frac{e^x}{1 + e^{x}}$. The covariate dependent selection into the target population ensures that there are differences between the covariate distributions in the target population and the population underlying the pooled data from all the studies ($R=1$). We considered a setting where there are three studies ($K=3$) and study participants were selected into one of the three studies by simulating from a multinomial logistic regression model $S|X, R=1 \sim Multinomial((p_1, p_2, p_3), \sum_{i=1}^n R_i)$, where $p_1 = \frac{\beta}{1 + \beta + \eta}$, $p_2 = \frac{\eta}{1 + \beta + \eta}$, $p_3 = 1 - p_1 - p_3$, with $\beta = exp(log(1.3) X_1 + log(1.3)X_2 + log(1.3) X_3 )$ and $\eta = exp(log(0.8) X_1 + log(0.8) X_2 + log(0.8) X_3)$. The covariate dependent selection into each study ensures that the populations underlying each study have different covariate distributions. The outcome was simulated from a logistic regression model $Y \sim Ber(\Pr[Y=1|X])$ where
\[
\Pr[Y=1|X] = expit(1 + 0.5 X_1 + 0.2 X_2 + 0.3 X_1^2 + 0.3 X_2^2).
\]
The model we evaluate the performance of was the asymptotic limit (estimated numerically) of a logistic regression model in the source population that used linear main affects of all five covariates. As that model does not include $X_1^2$ and $X_2^2$, it is misspecified. To evaluate the performance of the model in the target population we used sensitivity, specificity, negative predictive value, positive predictive value, the Brier score, and the area under the curve. The cut-point required to define sensitivity, specificity, negative and positive predictive value was selected using the Youden index (maximizing the sum of sensitivity and specificity) from the observations with $R=1$. 

We ran $1000$ simulations where the total sample size was $n = 2000$. Using this set-up, the prevalence rate of the outcome is approximately 75\% in the target population and 82\% in the collection of the study data. The average sample size in the data from the target population is $366$, $565$ in study one, $573$ in study two, and $496$ in study three.

The simulation results are presented in Figure \ref{fig:mis-spec}. The results show that the outcome model, doubly robust, and the weighting estimators  were nearly unbiased for all measures of model performance. The source estimator is biased for all the measures ($6-200\%$ relative bias) as it fails to account for the differences between the target population and the population underlying the pooled data from all the studies. In Supplementary Web Appendix \ref{app-sim} we present simulation results under correctly specified models and for cases where the outcome model and the model for study participation were estimated using a generalized additive models. The results from these additiona simulation studies show the same trends to those observed in Figure \ref{fig:mis-spec}.

\begin{figure}[htbp]
    \centering
    \includegraphics[width = \textwidth]{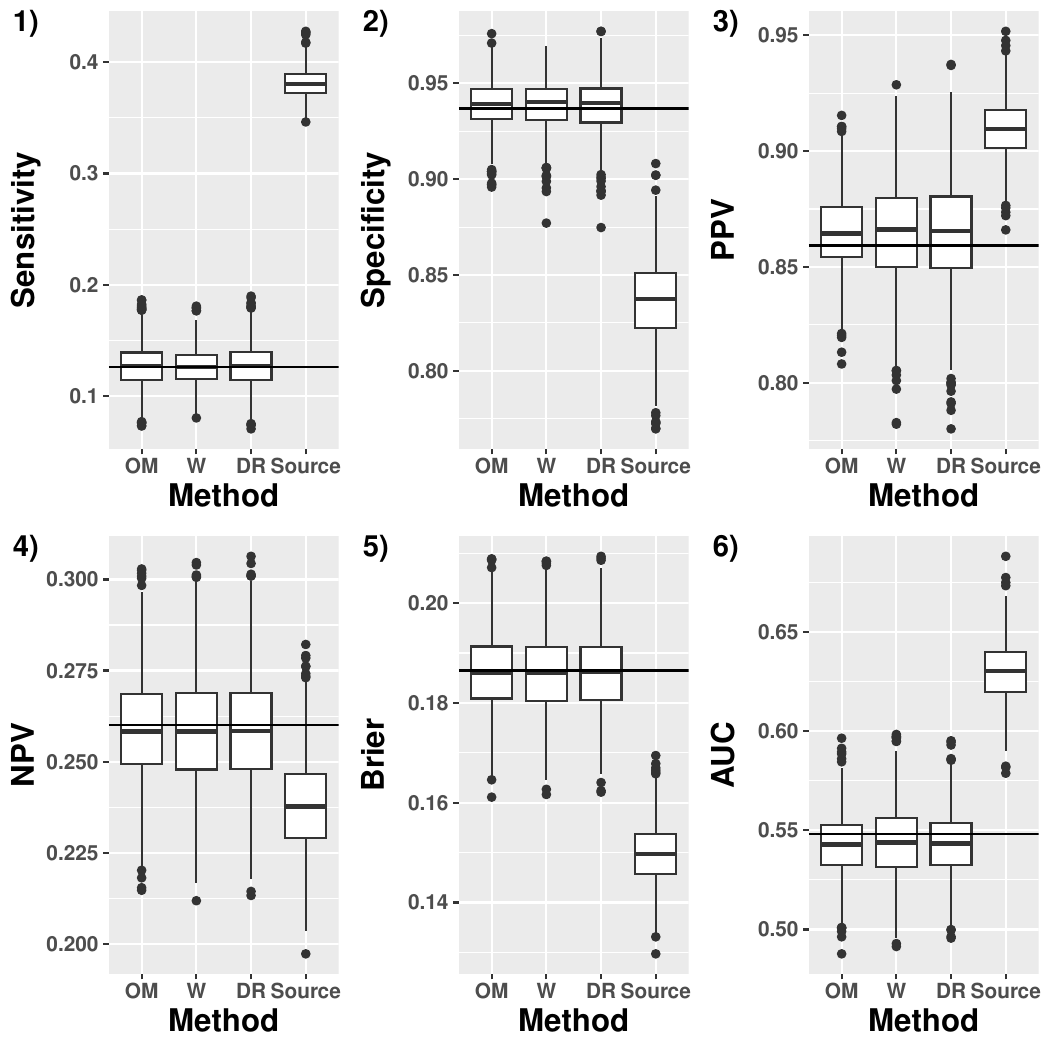}
    \caption{\label{fig:mis-spec} Simulation results for estimating model performance in the target population estimated using the outcome model (OM) estimator, weighting (W) estimator, doubly robust (DR) and an estimator that uses only data from the collection of studies (Source). The measures of model performance evaluated are: sensitivity (Plot 1), specificity (Plot 2), positive predictive value (Plot 3), negative predictive value (Plot 4), Brier score (Plot 5), and area under the curve (Plot 6). The horizontal line is the true measure of model performance (calculated analytically). Here, the model that is evaluated is correctly specified.}
\end{figure}

\section{Analysis using data on lung cancer screening}

In this section we apply the estimators developed to data on lung cancer screening. We use data from two large lung cancer screening trials, the National Lung Screening Trial (NLST) \cite{national2011reduced} and the Prostate, Lung, Colorectal, and Ovarian (PLCO) Cancer Screening Trial \cite{oken2011screening}. We are interested in evaluating the performance of a model for estimating the risk of being diagnosed with lung cancer within six years from study entry. 

NLST randomized participants to either screening with chest X-ray or screening with low-dose computed tomography. Participants enrolled to PLCO were randomized to either an usual care arm or an arm where participants received screening for various cancers including chest X-ray screening for lung cancer. As both trials include screening with chest X-ray we focus on the subset of participants that were randomized to receive chest X-ray screening for lung cancer. The outcome of interest is whether a participant was diagnosed with lung cancer within six years from study enrollment (binary outcome). 

A natural target population for lung cancer screening is everyone in the US who is eligible for lung cancer screening. To avoid violations of the positivity assumption and as the criteria used by U.S. Preventive Services Task Force to recommend people for lung cancer screening \cite{krist2021screening} are very similar to the eligibility criteria for NLST, we define the target population as all people in the US who meet the eligibility criteria for the NLST (people aged 55 to 74 that had $\geq 30$ pack-year history who were current smokers or had quit within the past 15 years). To obtain a reasonable sample from the target population we used data from the 2003-2004 NHANES survey. NHANES is a cross-sectional survey constructed to be representative of the non-institutionalized US population. We used the subset of the NHANES participants that participated in a smoking sub-study and met the NLST eligibility criteria (data from the smoking sub-study is needed to assess the eligibility criteria). Each observation in NHANES is associated with a sampling weight that accounts for, among other things, oversampling of certain subgroups and survey non-respondence. In Section \ref{sec:weights} in the Supplementary Web Appendix we show how the estimators presented in Section \ref{sec:est} can be modified to account for sampling weights.

For simplicity, and due to the limited amount of missing data, we focus on the participants that have complete data on covariates listed in Table \ref{tab:DA} and for NLST and PLCO participants that have six years of follow-up. This resulted in $22,920$ participants used from NLST, $17,639$ from PLCO, and $219$ participants from NHANES (representing approximately $8.5$ million participants). 

We split the NLST data into equal sized training and test sets. Using the data from the training set from NLST we built a logistic regression model with linear main effects of all variables listed in Table \ref{tab:DA}. To evaluate the performance of that model in the target population, we used data from the test set of the NLST and PLCO data as the study data and data from NHANES as the sample from the target population. To select the cut-point (c) used to calculate sensitivity, specificity, NPV, and PPV we used the Youden index calculated using data from the NLST training set (i.e.,~the value that maximizes the sum of sensitivity and specificity). To estimate the nuisance parameters needed for the implementation of the outcome model and weighting estimators logistic regression, models with linear main effects of all variables listed in Table \ref{tab:DA} were used. 

To calculate confidence intervals we used a stratified bootstrap procedure that was consistent with the NHANES sampling design \cite{shao2003impact}. We started by resampling at the first primary sampling unit level and within each first primary sampling unit resampled at the level of the secondary sampling unit. To estimate confidence intervals for the source estimator we used the non-parametric bootstrap. We note that there is some potential for non-identifiable overlap between the samples from the NLST, PLCO, and NHANES. But considering the large number of people eligible for lung cancer screening, substantial overlap between the samples is unlikely.

Table \ref{tab:DA} shows summary statistics for the variables used in the analysis stratified by datasource. The table shows that the summary statistics are relatively similar between the NLST and PLCO while there are substantial differences between the NHANES and the two trials (e.g., the NHANES sample has more comorbidities, less education, and is more racially diverse).

\begin{table}[htbp] 
\small
\centering 
\caption{\small Summary of participant characteristics for participants in the National Lung Screening Trial (NLST), the Prostate, Lung, Colorectal, and Ovarian (PLCO) Cancer Screening Trial, and the NHANES 2003-2004 survey. Continuous variables are summarized using mean (standard deviation) and categorical variables are summarized by percentages in each category. The summaries for the NHANES data are weighted by the NHANES sampling weight. BMI is body max index; Smoke years is the total number of years the participant smoked cigarettes; Smoke age is the age at smoking onset; Pack years is calculated as (Total number of years Smoked × Cigarettes Per Day/20).}
\label{tab:DA}
\label{res-data-an}\label{tab-1}
\begin{threeparttable}[htbp]
\begin{tabular}{@{}llll@{}}
\toprule
Variable                                                                & NLST        & PLCO        & NHANES     \\ \midrule 
Age                                                                     & 61.3 (5.0)  & 62.4 (5.2)  & 63.1 (5.5)  \\ 
BMI                                                                     & 27.9 (5.1)  & 27.6 (4.8)  & 28.6 (5.6)  \\ 
Race (White)                                                            &    90.8\%   & 91.0\%      & 84.5\%      \\ 
Education level & & & \\
\multicolumn{1}{l}{Some college education}                                                        &   55.5\%    & 53.2\%      & 45.7\%      \\ 
\multicolumn{1}{l}{High school graduate}                                                        &   38.8\%    & 37.5\%      & 29.1\%      \\ 
Smoke years                                                              &  39.6 (7.3) & 36.1 (9.2) & 42.5 (7.4)  \\ 
Gender (Male)                                                           &    58.7\%   & 63.8\%      & 63.0\%      \\ 
Marital status (Married)                                                &   68.4\%    & 72.8\%      & 64.7\%      \\ 
Pack year                                                               & 55.5 (23.3) & 57.0 (26.2) & 60.6 (28.9) \\ 
History of diabetes (Yes)                                                               &   9.3\%    & 8.5\%       & 20.3\%      \\ 
History of emphysema (Yes)                                                              &    7.3\%    & 6.5\%       & 8.7\%       \\ 
\begin{tabular}[c]{@{}l@{}}History of heart disease \\ or heart attack (Yes)\end{tabular} &    12.2\%   & 13.2\%      & 28.0\%      \\ 
History of hypertension (Yes)                                                           &   35.2\%   & 35.3\%      & 45.8\%      \\ 
Cigarettes per day categorical & & & \\
\multicolumn{1}{l}{11-20}                                                     & 47.6\% & 36.5\% &  53.8\% \\  
\multicolumn{1}{l}{21-30}                                                     & 27.4\% & 30.7\% &  19.2\% \\  
\multicolumn{1}{l}{31-40}                                                     & 18.0\% & 19.4\% &  16.8\% \\  
\multicolumn{1}{l}{41-60}                                                     & 6.2\% & 10.9\% &  9.0\% \\ 
\multicolumn{1}{l}{61-80}                                                     & 0.6\% & 2.1\% & 1.1\%  \\ 
\multicolumn{1}{l}{$>80$}                                                     & 0.1\% & 0.4\% &  0\%  \\
\bottomrule
\end{tabular}
\end{threeparttable}
\end{table}
Table \ref{res-data-an} shows estimates and standard errors from the outcome model estimator, the doubly robust estimator, the weighting estimator, and an estimator that pools data from the NLST test set and the PLCO trial for AUC, sensitivity, specificity, PPV, NPV, and Brier score. The result shows that the outcome model, doubly robust, and weighting estimators are similar for all measures suggesting limited impact of the specification of the outcome model and the model for study participation. The pooled estimator has lower Brier score, lower sensitivity, and higher specificity compared to the transportability estimators while the AUC and the negative and positive predictive values are similar to the outcome model and weighting estimators.

\begin{table}[htbp] 
\small
\centering 
\caption{\small Estimates and standard errors from the outcome model, weighting, and source population estimators of area under the curve, sensitivity, specificity, positive predictive value (PPV), negative predictive value (NPV), and Brier risk. Outcome model refers to estimates from the outcome model estimator, Doubly Robust refers to estimates from the doubly robust estimator, weighting refers to estimates from the weighing estimator, and source refers to estimates from an estimator that combines data from the NLST and PLCO.}
\label{res-data-an}\label{tab-1}
\begin{threeparttable}[htbp]
\begin{tabular}{@{}lllll@{}}
\toprule
        &  Outcome Model & Weighting & Doubly Robust & Source \\ \midrule
Area under the curve & 0.704 (0.013)                                                             &  0.680 (0.011)   & 0.689 (0.016)                                                      & 0.700 (0.0064)     \\ 
    
Sensitivity & 0.842 (0.029)                                                             & 0.834 (0.024)                         & 0.841 (0.029)                              & 0.613 (0.012)     \\ 
Specificity & 0.409 (0.046)                                                             & 0.402 (0.034)            & 0.408 (0.046)                                             & 0.673 (0.029)     \\ 
PPV         & 0.114 (0.0083)                                                            & 0.106 (0.0052)             & 0.111 (0.0078)                                           & 0.0971 (0.0029)   \\ 
NPV         & 0.966 (0.0026)                                                            & 0.966 (0.0020)          & 0.967 (0.0029)                                              & 0.968 (0.0013)    \\ 
Brier risk & 0.0835 (0.0051)                                                           & 0.0719 (0.0038)                   & 0.0824 (0.0049)                                    & 0.0543 (0.0012)   \\ 
\bottomrule
\end{tabular}
\end{threeparttable}
\end{table}

\section{Discussion}

In this manuscript we developed methods for meta analysis of prediction models that can be interpreted in the context of a target population. We provided identifiability results and estimators for several measures of model performance. The estimators show good performance in simulations and are used to analyse data on lung cancer screening. For the lung cancer screening analyses the data from the target population is associated with a sampling weight and we extend the estimators to handle sampling weights. All of the developments hold when evaluating the performance of both correctly specified and misspecified models.

In our set-up, the model that is being evaluated is built on a dataset that is independent of the data used for model development. This incorporates when an external evaluation is done, when a split into a training and a test set is used, the use of cross-validated measures of model performance, and when evaluating the performance of a biomarker. 

The methods developed assume that we have access to individual level data on participants from all studies and individual level data from the target population. Developing methods that can be used with summary level information would be useful. Other extensions of interest include extensions to more complex data-structures, systematically-missing data \cite{steingrimsson2022systematically}, and addressing measurement error.

\section*{Acknowledgements}

This work was supported in part by National Library of Medicine (NLM) Award R01LM013616, and Patient-Centered Outcomes Research Institute (PCORI) awards ME-2019C3-17875, and ME-2021C2-22365. Statements in this paper do not necessarily represent the views of the PCORI, its Board of Governors, the Methodology Committee, or the NIH. We thank the National Cancer Institute (NCI) for access to the National Lung Screening Trial (NLST) and PCLO data. This paper does not necessarily reflect the opinions or views of NCI, NHLBI, PLCO, or NLST. 
L. Wen is supported by the Natural Sciences and Engineering Research Council of Canada (NSERC) Discovery Grant [RGPIN-2023-03641, DGECR-2023-00455].

\bibliographystyle{ieeetr}
\bibliography{references}

\newpage 
\appendix 

\renewcommand{\theequation}{A.\arabic{equation}}
\setcounter{equation}{0}

\section{Proofs}
\label{app:proofs}
\textbf{Proof of theorem \ref{thm:id}:} For the outcome model representation
\begin{align*}
\E[I(h(X, \widehat \beta) > c)|Y = 1, S=0] &= \frac{\E[I(h(X, \widehat \beta) > c, Y = 1)|S=0]}{\E[I(Y=1)|S=0]} \\
&= \frac{\E[\E[I(h(X, \widehat \beta) > c, Y = 1)|X, S=0]|S=0]}{\E[\E[I(Y=1)|X,S=0]|S=0]} \\
&= \frac{\E[\E[I(h(X, \widehat \beta) > c, Y = 1)|X, S=0]|S=0]}{\E[\E[I(Y=1)|X,R=1]|S=0]} \\
&= \frac{\E[I(h(X, \widehat \beta) > c)\E[I(Y = 1)|X, S=0]|S=0]}{\E[\E[I(Y=1)|X,R=1]|S=0]} \\
&= \frac{\E[I(h(X, \widehat \beta) > c) \Pr[Y = 1|X, R=1]|S=0]}{\E[\Pr[Y=1|X,R=1]|S=0]}.
\end{align*}
For the inverse-odds weighting we have\allowdisplaybreaks
\begin{align*}
\E[I(h(X, \widehat \beta) > c)|Y = 1, S=0] &= \frac{\E[I(h(X, \widehat \beta) > c, Y = 1)|S=0]}{\E[I(Y=1)|S=0]} \\
&= \frac{\E[\E[I(h(X, \widehat \beta) > c, Y = 1)|X,S=0]|S=0]}{\E[\E[I(Y=1)|X,S=0]|S=0]} \\
&= \frac{\E[I(S=0)I(h(X, \widehat \beta) > c) \Pr[Y = 1|X,S=0]]}{\E[I(S=0)\Pr[Y=1|X,S=0]]} \\
&= \frac{\E[I(S=0)I(h(X, \widehat \beta) > c) \Pr[Y = 1|X,R=1]]}{\E[I(S=0)\Pr[Y=1|X,R=1]]} \\
&= \frac{\E\left[I(S=0)I(h(X, \widehat \beta) > c) \frac{\Pr[Y = 1,R=1|X]}{\Pr[R=1|X]}\right]}{\E\left[I(S=0)\frac{\Pr[Y=1,R=1|X]}{\Pr[R=1|X]}\right]} \\
&= \frac{\E\left[\E\left[I(S=0)I(h(X, \widehat \beta) > c) \frac{\Pr[Y = 1,R=1|X]}{\Pr[R=1|X]}\Big|X\right]\right]}{\E\left[\E\left[I(S=0)\frac{\Pr[Y=1,R=1|X]}{\Pr[R=1|X]}\Big|X \right]\right]} \\
&= \frac{\E\left[\E\left[\frac{\Pr[R=0|X]}{\Pr[R=1|X]}I(h(X, \widehat \beta) > c) \Pr[Y = 1,R=1|X]\Big|X\right]\right]}{\E\left[\E\left[\frac{\Pr[S=0|X]}{\Pr[R=1|X]}\Pr[Y=1,R=1|X]\Big|X \right]\right]} \\
&= \frac{\E\left[\E\left[\frac{\Pr[R=0|X]}{\Pr[R=1|X]}I(h(X, \widehat \beta) > c, Y=1, R=1) \Big|X\right]\right]}{\E\left[\E\left[\frac{\Pr[S=0|X]}{\Pr[R=1|X]}I(Y=1,R=1)\Big|X \right]\right]} \\
&= \frac{\E\left[\frac{\Pr[S=0|X]}{\Pr[R=1|X]}I(h(X, \widehat \beta) > c, Y=1, R=1) \right]}{\E\left[\frac{\Pr[S=0|X]}{\Pr[R=1|X]}I(Y=1,R=1)\right]}
\end{align*}

\noindent \textbf{Proof of theorem \ref{thm-id-auc}:} Let $l$ denote a random observation from the target population with the outcome ($Y=1$) and $k$ denote a random observation from the target population without the outcome ($Y=0$). Following \cite{li2022estimating} we have that the AUC in the target population can be rewritten as
\allowdisplaybreaks
\begin{align*}
&\normalfont\E[I(h( X _l, \widehat \beta) > h( X _k, \widehat \beta))|Y _l=1, Y _k=0, S _l=0, S _k=0] \\ 
=& \frac{\normalfont\E[I(h( X _l, \widehat \beta) > h( X _k, \widehat \beta), Y _l=1, Y _k=0)|S _l=0, S _k=0]}{\Pr[Y _l=1, Y _k=0| S _l=0, S _k=0]} \\ 
=& \frac{ \normalfont\E[\normalfont\E[I(h( X _l, \widehat \beta) > h( X _k, \widehat \beta), Y _l=1, Y _k=0)| X _l,  X _k, S _l=0, S _k=0] |S _l=0, S _k=0]}{\normalfont\E[\Pr[Y _l=1, Y _k=0|X _l,  X _k, S _l=0, S _k=0]|S _l=0, S _k=0]} \\
= & \frac{ \normalfont\E[I(S _l=0, S _k=0, h( X _l, \widehat \beta) > h( X _k, \widehat \beta))\Pr[Y _l=1, Y _k=0| X _l,  X _k, S _l=0, S _k=0]]}{\normalfont\E[I(S _l=0, S _k=0)\Pr[Y _l=1, Y _k=0|X _l,  X _k, S _l=0, S _k=0]]} \\ 
= & \frac{ \normalfont\E[I(S _l=0, S _k=0, h( X _l, \widehat \beta) > h( X _k, \widehat \beta))\Pr[Y _l=1, Y _k=0| X _l,  X _k, R _l=1, R _k=1]]}{\normalfont\E[I(S _l=0, S _k=0)\Pr[Y _l=1, Y _k=0|X _l,  X _k, R _l=1, R _k=1]]} \\
= & \frac{ \normalfont\E[I(S _l=0, S _k=0, h( X _l, \widehat \beta) > h( X _k, \widehat \beta))\Pr[Y _l=1| X _l, R _l=1]\Pr[Y _k=0| X _k, R _k=1]]}{\normalfont\E[I(S _l=0, S _k=0)\Pr[Y _l=1|X _l, R _l=1]\Pr[Y _k=0|R _k=1,  X _k]]} \\
= & \frac{ \normalfont\E[I(h( X _l, \widehat \beta) > h( X _k, \widehat \beta))\Pr[Y _l=1| X _l, R _l=1]\Pr[Y _k=0| X _k, R _k=1]|S _l=0, S _k=0]}{\normalfont\E[\Pr[Y _l=1| X _l,R _l=1]\Pr[Y _k=0|R _k=1,  X _k]|S _l=0, S _k=0]}.
\end{align*}
Now we prove the weighting-based identifiability result given by expression \eqref{AUC-ID-W} \allowdisplaybreaks
\begin{align*}
&\normalfont\E[I(h( X _l, \widehat \beta) > h( X _k, \widehat \beta))|Y _l=1, Y _k=0, S _l=0, S _k=0] \\  
=& \frac{\normalfont\E[I(h( X _l, \widehat \beta) > h( X _k, \widehat \beta), Y _l=1, Y _k=0)|S _l=0, S _k=0]}{\Pr[Y _l=1, Y _k=0| S _l=0, S _k=0]}\\ 
=& \frac{\normalfont\E[\normalfont\E[I(h( X _l, \widehat \beta) > h( X _k, \widehat \beta), Y _l=1, Y _k=0)| X _l,  X _k, S _l=0, S _k=0] |S _l=0, S _k=0]}{\normalfont\E[\Pr[Y _l=1, Y _k=0|X _l,  X _k, S _k=0, S _k=0]|S _l=0, S _k=0]} \\ 
= & \frac{\normalfont\E[I(S _l=0, S _k=0)I(h( X _l, \widehat \beta) > h( X _k, \widehat \beta)) \Pr[Y _l=1, Y _k=0| X _l,  X _k, S _l=0, S _k=0]]}{\normalfont\E[I(S _l=0, S _k=0)\Pr[Y _l=1, Y _k=0|X _l,  X _k, S _k=0, S _k=0]]}  \\
= & \frac{\normalfont\E[I(S _l=0, S _k=0)I(h( X _l, \widehat \beta) > h( X _k, \widehat \beta)) \Pr[Y _l=1, Y _k=0| X _l,  X _k, R _l=1, R _k=1]]}{\normalfont\E[I(S _l=0, S _k=0)\Pr[Y _l=1, Y _k=0|X _l,  X _k, R _l=1, R _k=1]]}  \\
= & \frac{\normalfont\E\left[I(S _l=0, S _k=0)I(h( X _l, \widehat \beta) > h( X _k, \widehat \beta))\frac{\Pr[Y _l=1, Y _k=0, R _l=1, R _k=1| X _l,  X _k]}{\Pr[R _l=1, R _k=1| X _l,  X _k]}\right]}{\normalfont\E\left[I(S _l=0, S _k=0)\frac{\Pr[Y _l=1, Y _k=0, R _l=1, R _k=1| X _l,  X _k]]}{\Pr[R _l=1, R _k=1| X _l,  X _k]}\right]}  \\
= & \frac{\normalfont\E\left[\normalfont\E\left[I(S _l=0, S _k=0)I(h( X _l, \widehat \beta) > h( X _k, \widehat \beta))\frac{\Pr[Y _l=1, Y _k=0, R _l=1, R _k=1| X _l,  X _k]}{\Pr[R _l=1, R _k=1| X _l,  X _k]}\big| X _l, X _k\right]\right]}{\normalfont\E\left[\normalfont\E\left[I(S _l=0, S _k=0)\frac{\Pr[Y _l=1, Y _k=0, R _l=1, R _k=1| X _l,  X _k]]}{\Pr[R _l=1, R _k=1| X _l,  X _k]}\big| X _l, X _k\right]\right]}  \\
= &\frac{\normalfont\E\left[\frac{\Pr[S=0| X _l]\Pr[S=0| X _k]}{\Pr[R=1| X _l]\Pr[R=1| X _k]} I(h( X _l, \widehat \beta) > h( X _k, \widehat \beta)) \Pr[Y _l=1, Y _k=0, R _l=1, R _k=1| X _l,  X _k]\right]}{\normalfont\E\left[\frac{\Pr[S=0| X _l]\Pr[S=0| X _k]}{\Pr[R=1| X _l]\Pr[R=1| X _k]}\Pr[Y _l=1, Y _k=0, R _l=1, R _k=1| X _l,  X _k]\right]} \\
= &\frac{\normalfont\E\left[\frac{\Pr[S=0| X _l]\Pr[S=0| X _k]}{\Pr[R=1| X _l]\Pr[R=1| X _k]} I(h( X _l, \widehat \beta) > h( X _k, \widehat \beta), Y _l=1, Y _k=0, R _l=1, R _k=1)\right]}{\normalfont\E\left[\frac{\Pr[S=0| X _l]\Pr[S=0| X _k]}{\Pr[R=1| X _l]\Pr[R=1| X _k]}I(Y _l=1, Y _k=0, R _l=1, R _k=1)\right]}
\end{align*}

\section{Influence function for sensitivity}
\label{app:send-inf}
We start by finding the non-parametric influence function for the denominator and numerator in the outcome formulation for sensitivity 
\[
\frac{\E[\E[I(h(X, \widehat \beta) > c) I(Y = 1)|X, R=1]|S=0]}{\E[\E[Y=1|X,R=1]|S=0]} 
=: \frac{\alpha_1}{\alpha_0}
\]
Following the same steps as in the proof of theorem 3 in \cite{morrison2023robust} we get the non-parametric influence function for the numerator is
\begin{align*}
\alpha_{1, p_0}^1 &= \frac{I(S=0)}{\Pr[S=0]} \big\{E[I(h(X, \widehat \beta) > c, Y = 1)|X, R=1] - \alpha_1 \big\}  \\
 & +  \frac{\Pr[S=0|X] I(R=1)}{\Pr[S=0] \Pr[R=1|X]} \big\{ I(h(X, \widehat \beta) > c, Y = 1) - \E[I(h(X, \widehat \beta) > c) I(Y = 1)|X, R=1] \big\} \\
 &= \frac{I(S=0)}{\Pr[S=0]} \big\{I(h(X, \widehat \beta) > c) E[I(Y = 1)|X, R=1] - \alpha_1 \big\}  \\
 & +  I(h(X, \widehat \beta) > c)\frac{\Pr[S=0|X] I(R=1)}{\Pr[S=0] \Pr[R=1|X]} \big\{ I(Y = 1) - \E[I(Y = 1)|X, R=1] \big\} 
\end{align*}
and the non-parametric influence function for the denominator is 
\begin{align*}
\alpha_{0,p_0}^1 &= \frac{I(S=0)}{\Pr[S=0]} \big\{E[I(Y = 1)|X, R=1] - \alpha_0 \big\}  \\
 & +  \frac{\Pr[S=0|X] I(R=1)}{\Pr[S=0] \Pr[R=1|X]} \big\{ I(Y = 1) - \E[I(Y = 1)|X, R=1] \big\}
\end{align*}
The influence function $\alpha_{1, p_0}^1$ leads to the non-parametric estimator for the numerator
\begin{align*}
\frac{1}{n_0} \sum_{i=1}^n \left(I(S_i=0) I(h(X_i, \widehat \beta) > c) \widehat m(X_i) +  \widehat w(X_i) I(R_i=1) I(h(X_i, \widehat \beta) > c) \big\{ I(Y_i = 1) - \widehat m(X_i)\big\}\right),.
\end{align*}
And, The influence function $\alpha_{0, p_0}^1$ leads to the non-parametric estimator for the denominator
\begin{align*}
\frac{1}{n_0} \sum_{i=1}^n \left(I(S_i=0) \widehat m(X_i) +  \widehat w(X_i) I(R_i=1) \big\{ I(Y_i = 1) - \widehat m(X_i)\big\}\right)
\end{align*}
Hence, a natural estimator for the sensitivity in the target population is 
\[
\widehat \psi_{sens, dr} = \frac{\sum_{i=1}^n \left(I(S_i=0) I(h(X_i, \widehat \beta) > c) \widehat m(X_i) +  \widehat w(X_i) I(R_i=1) I(h(X_i, \widehat \beta) > c) \big\{ I(Y_i = 1) - \widehat m(X_i)\big\}\right)}{\sum_{i=1}^n \left(I(S_i=0) \widehat m(X_i) +  \widehat w(X_i) I(R_i=1) \big\{ I(Y_i = 1) - \widehat m(X_i)\big\}\right)}
\]
The non-parametric influence function for the sensitivity in the target population is 
\begin{align*}
\frac{\alpha_{1, p_0}^1}{\alpha_0} - \frac{\alpha_1 \alpha_{0,p_0}^1}{\alpha_0^2} &=  \frac{1}{\alpha_0} \left(\alpha_{1, p_0}^1 - \psi_{sens} \alpha_{0, p_0}^1 \right) 
\end{align*}
Now 
\begin{align*}
- \frac{I(S=0)}{\Pr[S=0]} \alpha_1 + \psi_{sens} \frac{I(S=0)}{\Pr[S=0]} \alpha_0 = 0
\end{align*}
So 
\begin{align*}
\alpha_{1, p_0}^1 - \psi_{sens} \alpha_{0, p_0}^1 &= A - B,
\end{align*}
where
\begin{align*}
A &= \frac{I(S=0)}{\Pr[S=0]} \big\{I(h(X, \widehat \beta) > c) E[I(Y = 1)|X, R=1] \big\}  \\
 & +  I(h(X, \widehat \beta) > c)\frac{\Pr[S=0|X] I(R=1)}{\Pr[S=0] \Pr[R=1|X]} \big\{ I(Y = 1) - \E[I(Y = 1)|X, R=1] \big\} \\  
 &= I(h(X, \widehat \beta) > c) \bigg(\frac{I(S=0)}{\Pr[S=0]}  E[I(Y = 1)|X, R=1]   \\
 & +  \frac{\Pr[S=0|X] I(R=1)}{\Pr[S=0] \Pr[R=1|X]} \big\{ I(Y = 1) - \E[I(Y = 1)|X, R=1] \big\} \bigg)
\end{align*}
\begin{align*}
B = \psi_{sens} \left(\frac{I(S=0)}{\Pr[S=0]} E[I(Y = 1)|X, R=1] +  \frac{\Pr[S=0|X] I(R=1)}{\Pr[S=0] \Pr[R=1|X]} \big\{ I(Y = 1) - \E[I(Y = 1)|X, R=1] \big\}\right)
\end{align*}
Hence, 
\begin{align*}
&\alpha_{1, p_0}^1 - \psi_{sens} \alpha_{0, p_0}^1 = (I(h(X, \widehat \beta) > c) - \psi_{sens}) \times 
\\ &\left(\frac{I(S=0)}{\Pr[S=0]} E[I(Y = 1)|X, R=1] +  \frac{\Pr[S=0|X] I(R=1)}{\Pr[S=0] \Pr[R=1|X]} \big\{ I(Y = 1) - \E[I(Y = 1)|X, R=1] \big\}\right).
\end{align*}
And the non-parametric influence function for sensitivity is
\[
\frac{(I(h(X, \widehat \beta) > c) - \psi_{sens}) \left(\frac{I(S=0)}{\Pr[S=0]} E[I(Y = 1)|X, R=1] +  \frac{\Pr[S=0|X] I(R=1)}{\Pr[S=0] \Pr[R=1|X]} \big\{ I(Y = 1) - \E[I(Y = 1)|X, R=1] \big\}\right)}{\E[\E[Y=1|X,R=1]|S=0]}
\]

\section{Consistency}
\label{app:asym-prop}
Now we prove Theorem \ref{thm-sens-asymp} for the consistency of the doubly robust estimator for sensitivity in the target population.
\begin{proof}
The doubly robust estimator for sensitivity is
\[
\widehat \psi_{sens, dr} = \frac{\sum_{i=1}^n \left(I(S_i=0) I(h(X_i, \widehat \beta) > c) \widehat m(X_i) +  \widehat w(X_i) I(R_i=1) I(h(X_i, \widehat \beta) > c) \big\{ I(Y_i = 1) - \widehat m(X_i)\big\}\right)}{\sum_{i=1}^n \left(I(S_i=0) \widehat m(X_i) + \widehat w(X_i) I(R_i=1)\big\{ I(Y_i = 1) - \widehat m(X_i)\big\}\right)}
\]
Let $m^*(X)$ be the asymptotic limit of $\widehat m(X)$ and $w^*(X)$ be the asymptotic limit of $\widehat w(X)$. As $n \longrightarrow \infty$ the doubly robust estimator for sensitivity $\widehat \psi_{sens, dr}$ converges by the continuous mapping theorem to
\[
\frac{\Pr[S=0]^{-1}\E\left[I(S=0) I(h(X, \widehat \beta) > c) m^*(X) +  w^*(X) I(R=1) I(h(X, \widehat \beta) > c) \big\{ I(Y = 1) - m^*(X)\big\}\right]}{\Pr[S=0]^{-1}\E \left[I(S=0) m^*(X) +  w^*(X) I(R=1) \big\{ I(Y = 1) - m^*(X)\big\}\right]}.
\]
First assume that $m^*(X) = \Pr[Y = 1|X, R=1]$ but we do not assume that $p^*(X) = \Pr[R=1|X]$. Under that assumption 
\begin{align*}
\Pr[S=0]^{-1}\E\left[I(S=0) I(h(X, \widehat \beta) > c) m^*(X)\right] &= \Pr[S=0]^{-1} \E\left[I(S=0) I(h(X, \widehat \beta) > c) \Pr[Y = 1|X, R=1]\right] \\
&= \E\left[I(h(X, \widehat \beta) > c) \Pr[Y = 1|X, R=1]\big|S=0\right]
\end{align*}
and 
\begin{align*}
&E\left[w^*(X) I(R=1) I(h(X, \widehat \beta) > c) \big\{ I(Y = 1) - m^*(X)\big\}\right] \\ &= E\left[\E\left[w^*(X) I(R=1) I(h(X, \widehat \beta) > c) \big\{ I(Y = 1) - \Pr[Y = 1|X, R=1] \big\} \big|X \right]\right] \\
&= E\left[w^*(X) I(h(X, \widehat \beta) > c)\E\left[ I(R=1) \big\{ I(Y = 1) - \Pr[Y = 1|X, R=1] \big\} \big|X \right]\right] \\
&= E\left[w^*(X)  I(h(X, \widehat \beta) > c)\Pr[R=1|X] \E\left[ \big\{ I(Y = 1) - \Pr[Y = 1|X, R=1] \big\} \big|X, R=1 \right]\right] \\ &= 0
\end{align*}
Combining this gives 
\begin{align*}
&\Pr[S=0]^{-1}\E\left[I(S=0) I(h(X, \widehat \beta) > c) m^*(X) +  w^*(X) I(R=1) I(h(X, \widehat \beta) > c) \big\{ I(Y = 1) - m^*(X)\big\}\right] \\ &= \E\left[I(h(X, \widehat \beta) > c) \Pr[Y = 1|X, R=1]\big|S=0\right].
\end{align*}
The same arguments give 
\begin{align*}
&\Pr[S=0]^{-1}\E \left[I(S=0) m^*(X) +  w^*(X) I(R=1) \big\{ I(Y = 1) - m^*(X)\big\}\right] \\
&= \E\left[\Pr[Y = 1|X, R=1]\big|S=0\right].
\end{align*}
And it follows from the identifiability results of the sensitivity in the target population that
\[
\widehat \psi_{sens, dr} \longrightarrow \E[I(h(X, \widehat \beta) > c)|Y = 1, S=0]
\]
in probability.
\\
\\
Now assume that $p^*(X) = \Pr[R=1|X]$ but we do not assume that $m^*(X) = \Pr[Y = 1|X, R=1]$. Under that assumption
\begin{align*}
&\Pr[S=0]^{-1}\E\left[\frac{(1-\Pr[R=1|X]) I(R=1) I(h(X, \widehat \beta) > c)}{ \Pr[R=1|X]} I(Y = 1)\big\}\right] \\ &= \E[I(h(X, \widehat \beta) > c, Y = 1)|S=0]
\end{align*}
by the results in the proof of Theorem \ref{thm:id}. Now 
\begin{align*}
&\E\left[ I(S=0) I(h(X, \widehat \beta) > c) m^*(X) - \frac{(1-\Pr[R=1|X]) I(R=1) I(h(X, \widehat \beta) > c)}{ \Pr[R=1|X]} m^*(X)\big\}\right] \\ 
&= \E\left[\E\left[ I(S=0) I(h(X, \widehat \beta) > c) m^*(X) - \frac{(1-\Pr[R=1|X]) I(R=1) I(h(X, \widehat \beta) > c)}{ \Pr[R=1|X]} m^*(X)\big\}\bigg| X \right]\right] \\
&= \E\left[ E[I(S=0)|X] I(h(X, \widehat \beta) > c) m^*(X) - \frac{(1-\Pr[R=1|X]) I(h(X, \widehat \beta) > c)}{ \Pr[R=1|X]} m^*(X) E[I(R=1)|X]\big\}\right] \\
&= 0.
\end{align*}
Hence,
\begin{align*}
&Pr[S=0]^{-1}\E\left[I(S=0) I(h(X, \widehat \beta) > c) m^*(X) +  w^*(X) I(R=1) I(h(X, \widehat \beta) > c) \big\{ I(Y = 1) - m^*(X)\big\}\right]\\ &= \E[I(h(X, \widehat \beta) > c, Y = 1)|S=0]
\end{align*}
and by similar arguments 
\begin{align*}
&Pr[S=0]^{-1}\E\left[I(S=0) m^*(X) +  \frac{\Pr[S=0|X] I(R=1)}{ \Pr[R=1|X]} \big\{ I(Y = 1) - m^*(X)\big\}\right] \\ &= \E[I(Y = 1)|S=0].
\end{align*}
And it follows from the identifiability results of the sensitivity in the target population that
\[
\widehat \psi_{sens, dr} \longrightarrow \E[I(h(X, \widehat \beta) > c)|Y = 1, S=0]
\]
in probability.
\end{proof}
Now we prove Theorem \ref{thm-auc-asymp} for the consistency of the doubly robust estimator for AUC in the target population.
\begin{proof}
The result follows from the continuous mapping theorem if we can show that
$$\frac{1}{n(n-1)} \sum_{i \neq j} d^{dr}(O_i, O_j,  I(h(X_i, \widehat \beta) > h(X_j, \widehat \beta))$$ and the denominator $$\frac{1}{n(n-1)} \sum_{i \neq j} d^{dr}(O_i, O_j, 1)$$ are consistent

First, suppose that $w^*(X) = \frac{\Pr[R=0|X]}{\Pr[R=1|X]}$ but we dont make any assumptions on the asymptotic limit $m^*(X)$. Minor adaptations of  Theorem 1 in  \cite{li2022estimating} show that $\widehat \psi_{auc, w}$ is consistent when $w^*(X) = \frac{\Pr[R=0|X]}{\Pr[R=1|X]}$ so it is enough to show that for $k(X_i,X_j) = I(h(X_i, \widehat \beta) > h(X_j, \widehat \beta)$ and $k(X_i,X_j) =1$ then 
\begin{equation*} 
\frac{1}{n(n-1)}\sum_{i \neq j}  \widehat m(X_i) (1 - \widehat m(X_j)) (I(S_i=0, S_j=0)- I(R_i=1, R_j=1) \widehat w(X_i) \widehat w(X_j)) k(X_i,X_j) \overset{P}{\longrightarrow} 0.
\end{equation*} 
Now as $ k(X_i,X_j)$ only depends on $(X_i, X_j)$
\begin{align*}
&\E\left[  m^*(X_i) (1 - m^*(X_j)) I(R_i=1, R_j=1) w^*(X_i) w^*(X_j) k(X_i,X_j)  \right] \\
&= \E\left[ \E\left[ m^*(X_i) (1-m^*(X_j)) I(R_i=1, R_j=1) w^*(X_i) w^*(X_j) k(X_i,X_j) \big| X_i, X_i \right] \right] \\
&= \E\left[  m^*(X_i) (1-m^*(X_j)) \E[I(R_i=1, R_j=1)|X_i, X_j] \frac{\Pr[R=0|X_i]}{\Pr[R=1|X_i]} \frac{\Pr[R=0|X_j]}{\Pr[R=1|X_j]} k(X_i,X_j) \right]  \\
&= \E\left[  m^*(X_i) (1-m^*(X_j)) \Pr[R=0|X_i]\Pr[R=0|X_j] k(X_i,X_j) \right] 
\end{align*}
and
\begin{align*}
&\E[ m^*(X_i) (1-m^*(X_j)) I(S_i=0, S_j=0) k(X_i,X_j)] \\
&= \E[ \E[ m^*(X_i) (1-m^*(X_j)) I(S_i=0, S_j=0) k(X_i,X_j)| X_i, X_j]] \\
&= \E[  m^*(X_i) (1-m^*(X_j)) \E[I(S_i=0, S_j=0)|X_i, X_j] k(X_i,X_j)] \\
&= \E[  m^*(X_i) (1-m^*(X_j)) \Pr[R=0|X_i] \Pr[R=0|X_j] k(X_i,X_j)].
\end{align*}
It follows that 
\[
\E[m^*(X_i) (1-m^*(X_j)) (I(S_i=0, S_j=0)- I(R_i=1, R_j=1) w^*(X_i) w^*(X_j)) k(X_i,X_j) ] = 0.
\]
By Lemma 2 in \cite{li2022estimating} $\widehat m(X_i) (1 - \widehat m(X_j))$ and $\widehat m(X_i) (1- \widehat m(X_j)) \widehat w(X_i) \widehat w(X_j)$ are stochastically equicontinuous and the results follows from applying Lemma 1 in \cite{li2022estimating}.
\\
\\
Now, assume that $m^*(X) = \Pr[Y=1|X,m R=1]$ but we don't assume that $w^*(X) = \frac{\Pr[R=0|X]}{\Pr[R=1|X]}$. Minor adaptations of  Theorem 2 in  \cite{li2022estimating} show that $\widehat \psi_{auc, out}$ is consistent when $m^*(X) = \Pr[Y=1|X,m R=1]$  so it is enough to show that for $k(X_i,X_j) = I(h(X_i, \widehat \beta) > h(X_j, \widehat \beta)$ and $k(X_i,X_j) =1$ then 
\begin{equation*} 
\frac{1}{n(n-1)}\sum_{i \neq j}  \widehat w(X_i) \widehat w(X_j) (I(Y_i=1, Y_j=0) - \widehat m(X_i) (1 - \widehat m(X_j))) I(R_i=1, R_j=1) k(X_i,X_j) \overset{P}{\longrightarrow} 0.
\end{equation*}
Now 
\begin{align*}
&\E[w^*(X_i) w^*(X_j) (I(Y_i=1, Y_j=0)I(R_i=1, R_j=1) - m^*(X_i) (1 - m^*(X_j)))   k(X_i,X_j)] \\
&= \E[\E[w^*(X_i) w^*(X_j) (I(Y_i=1, Y_j=0) - m^*(X_i) (1- m^*(X_j)))  I(R_i=1, R_j=1) k(X_i,X_j)|X_i, X_i]] \\
&= \E[w^*(X_i) w^*(X_j) (\E[I(Y_i=1, Y_j=0)|X_i, X_i, R_i=1, R_j=1] \\
&- \Pr[Y_i=1|X_i, R_i=1] \Pr[Y_i=j|X_j, R_j=1] )  \Pr[R_i=1|X_i] \Pr[R_j=1|X_j] k(X_i,X_j)] \\
&= 0
\end{align*}
The results follows from applying Lemma 1 in \cite{li2022estimating}.
\end{proof}
 
\section{Results for specificity, negative and positive predictive value, and estimators of risk}
\label{app:NPV}

\subsection{Identification}

The following theorem shows identifiability of specificity in the target population is given below and proved in Supplementary Web Appendix \ref{app:proofs}.
\begin{theorem}
\label{thm:id-spec}
If assumptions A1 and A2 and that $\E[\Pr[Y=0|X,R=1]|S=0]>0$ hold, then the specificity in the target population is identifiable using the observed data through the observed data functional 
\begin{equation}
\label{id:out-spec}
\frac{\E[I(h(X, \widehat \beta) \leq c) \Pr[Y = 0|X, R=1]|R=0]}{\E[\Pr[Y=0|X,R=1]R=0]}
\end{equation}
or equivalently using the weighting representation
\begin{equation}
\label{id:ipw-spec}
\frac{\E\left[\frac{\Pr[R=0|X]}{\Pr[R=1|X]}I(h(X, \widehat \beta) \leq c, Y=0, R=1) \right]}{\E\left[\frac{\Pr[R=0|X]}{\Pr[R=1|X]}I(Y=0,R=1)\right]}.
\end{equation}
\end{theorem}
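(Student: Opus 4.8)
The plan is to follow the argument used to prove Theorem~\ref{thm:id}, substituting the event $\{Y=0\}$ for $\{Y=1\}$ and the event $\{h(X,\widehat\beta)\le c\}$ for $\{h(X,\widehat\beta)>c\}$ throughout, and recalling that $\{R=0\}$ and $\{S=0\}$ are the same event, so that conditioning on $R=0$ in \eqref{id:out-spec}--\eqref{id:ipw-spec} is conditioning on the target population. First I would write the target specificity as
\[
\Pr[h(X,\widehat\beta)\le c\mid Y=0, S=0]=\frac{\E[I(h(X,\widehat\beta)\le c, Y=0)\mid S=0]}{\E[I(Y=0)\mid S=0]},
\]
which is legitimate once the denominator is known to be positive; this positivity is exactly what the hypothesis $\E[\Pr[Y=0\mid X,R=1]\mid S=0]>0$ delivers after the exchangeability step below.

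For the outcome-model representation \eqref{id:out-spec}, I would apply the law of total expectation conditioning on $X$ inside both numerator and denominator, use the fact that $I(h(X,\widehat\beta)\le c)$ is a function of $X$ to pull it outside the inner conditional expectation in the numerator, and then invoke assumption A1, $Y\indep R\mid X$, to replace $\E[I(Y=0)\mid X,S=0]$ by $\E[I(Y=0)\mid X,R=1]=\Pr[Y=0\mid X,R=1]$; assumption A2 guarantees this conditional probability is well defined on the support of $X$ in the target population. Collecting terms yields \eqref{id:out-spec}, and in particular shows the denominator equals $\E[\Pr[Y=0\mid X,R=1]\mid S=0]$, justifying the ratio representation used in the previous display.

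For the weighting representation \eqref{id:ipw-spec}, I would rewrite each conditional expectation $\E[\cdot\mid S=0]$ as $\Pr[S=0]^{-1}\E[I(S=0)\,\cdot\,]$, condition on $X$, and use $\E[I(S=0)\mid X]=\Pr[R=0\mid X]$ together with the identity $\Pr[Y=0\mid X,R=1]=\Pr[Y=0,R=1\mid X]/\Pr[R=1\mid X]$ to introduce the inverse-odds weight $\Pr[R=0\mid X]/\Pr[R=1\mid X]$ acting on the study population; the indicator $I(h(X,\widehat\beta)\le c)$ times $\Pr[Y=0,R=1\mid X]$ then recombines into $\E[I(h(X,\widehat\beta)\le c, Y=0, R=1)\mid X]$, and the common factor $\Pr[S=0]$ cancels between numerator and denominator, giving \eqref{id:ipw-spec}. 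These steps are essentially bookkeeping identical to the sensitivity case; the only point requiring care is the single use of A1 to exchange $S=0$ for $R=1$ in the inner expectation of the outcome indicator, together with the use of A2 to ensure the weight and the conditional outcome probability are well defined --- I do not expect a genuine obstacle beyond that.
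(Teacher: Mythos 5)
Your proposal is correct and follows essentially the same route as the paper's own proof: the paper proves Theorem \ref{thm:id-spec} by repeating the argument for Theorem \ref{thm:id} verbatim with $\{Y=0\}$ and $\{h(X,\widehat\beta)\le c\}$ in place of $\{Y=1\}$ and $\{h(X,\widehat\beta)>c\}$, using iterated expectations, A1 to exchange $S=0$ for $R=1$ in the inner conditional, and the identity $\Pr[Y=0\mid X,R=1]=\Pr[Y=0,R=1\mid X]/\Pr[R=1\mid X]$ for the weighting form. Your observation that $\{R=0\}=\{S=0\}$ and your bookkeeping of where A1, A2, and the positivity hypothesis enter all match the paper's treatment.
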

\begin{proof}
For the outcome model representation
\begin{align*}
\E[I(h(X, \widehat \beta) \leq c)|Y = 0, S=0] &= \frac{\E[I(h(X, \widehat \beta) \leq c, Y = 0)|S=0]}{\E[I(Y=0)|S=0]} \\
&= \frac{\E[\E[I(h(X, \widehat \beta) \leq c, Y = 0)|X, S=0]|S=0]}{\E[\E[I(Y=1)|X,S=0]|S=0]} \\
&= \frac{\E[\E[I(h(X, \widehat \beta) \leq c, Y = 0)|X, R=1]|S=0]}{\E[\E[I(Y=1)|X,R=1]|S=0]} \\
&= \frac{\E[I(h(X, \widehat \beta) \leq c) \Pr[Y = 0|X, R=1]|S=0]}{\E[\Pr[Y=1|X,R=1]|S=0]}.
\end{align*}
For the inverse-odds weighting we have
\begin{align*}
\E[I(h(X, \widehat \beta) \leq c)|Y = 0, S=0] &= \frac{\E[I(h(X, \widehat \beta) \leq c, Y = 0)|S=0]}{\E[I(Y = 0)|S=0]} \\
&= \frac{\E[\E[I(h(X, \widehat \beta) \leq c, Y = 0)|X,S=0]|S=0]}{\E[\E[I(Y = 0)|X,S=0]|S=0]} \\
&= \frac{\E[I(S=0)I(h(X, \widehat \beta) \leq c) \Pr[Y = 0|X,S=0]]}{\E[I(S=0)\Pr[Y = 0|X,S=0]]} \\
&= \frac{\E[I(S=0)I(h(X, \widehat \beta) \leq c) \Pr[Y = 0|X,R=1]]}{\E[I(S=0)\Pr[Y = 0|X,R=1]]} \\
&= \frac{\E\left[I(S=0)I(h(X, \widehat \beta) \leq c) \frac{\Pr[Y = 0,R=1|X]}{\Pr[R=1|X]}\right]}{\E\left[I(S=0)\frac{\Pr[Y = 0,R=1|X]}{\Pr[R=1|X]}\right]} \\
&= \frac{\E\left[\E\left[I(S=0)I(h(X, \widehat \beta) \leq c) \frac{\Pr[Y = 0,R=1|X]}{\Pr[R=1|X]}\Big|X\right]\right]}{\E\left[\E\left[I(S=0)\frac{\Pr[Y = 0,R=1|X]}{\Pr[R=1|X]}\Big|X \right]\right]} \\
&= \frac{\E\left[\E\left[\frac{\Pr[S=0|X]}{\Pr[R=1|X]}I(h(X, \widehat \beta) \leq c) \Pr[Y = 0,R=1|X]\Big|X\right]\right]}{\E\left[\E\left[\frac{\Pr[S=0|X]}{\Pr[R=1|X]}\Pr[Y = 0,R=1|X]\Big|X \right]\right]} \\
&= \frac{\E\left[\E\left[\frac{\Pr[S=0|X]}{\Pr[R=1|X]}I(h(X, \widehat \beta) \leq c, Y = 0, R=1) \Big|X\right]\right]}{\E\left[\E\left[\frac{\Pr[S=0|X]}{\Pr[R=1|X]}I(Y = 0,R=1)\Big|X \right]\right]} \\
&= \frac{\E\left[\frac{\Pr[S=0|X]}{\Pr[R=1|X]}I(h(X, \widehat \beta) \leq c, Y = 0, R=1) \right]}{\E\left[\frac{\Pr[S=0|X]}{\Pr[R=1|X]}I(Y = 0,R=1)\right]}
\end{align*}
\end{proof}
\noindent Recall that the positive predicted value in the target population is defined as $\Pr[Y=1|I(h(X, \widehat \beta) > c), S=0]$ and the negative predicted value in the target population is defined as $\Pr[Y=0|I(h(X, \widehat \beta) \leq c), S=0]$. The following two theorems show that the positive predictive value and negative predictive value are identifiable.
\begin{theorem}
\label{thm:id-ppv}
Under assumptions A1 and A2 and that $\E[I(h(X, \widehat \beta) > c)|S=0]>0$ the positive predictive value in the target population is identifiable using the observed data through the observed data functional 
\begin{equation}
\label{id:out-ppv}
\frac{\E[I(h(X, \widehat \beta) > c)\Pr[Y = 1|X, R=1]|S=0]}{\E[I(h(X, \widehat \beta) > c)|S=0]}
\end{equation}
or equivalently using the weighting representation
\begin{equation}
\label{id:ipw-ppv}
\frac{\E\left[\frac{\Pr[S=0|X]}{\Pr[R=1|X]}I(h(X, \widehat \beta) > c, Y=1, R=1) \right]}{\E[I(S=0)I(h(X, \widehat \beta) > c)]} 
\end{equation}
\end{theorem}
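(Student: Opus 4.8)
The plan is to mirror the argument used in the proof of Theorem~\ref{thm:id}, exploiting the fact that the positive predictive value conditions on the model-derived classification $I(h(X,\widehat\beta)>c)$ rather than on the true outcome, so the denominator is already a functional of the target-population covariate distribution and only the numerator needs to be re-expressed using the exchangeability assumption.

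First I would write, using that $\E[I(h(X,\widehat\beta)>c)\mid S=0]>0$ so the conditioning event has positive probability,
\[
\Pr[Y=1\mid h(X,\widehat\beta)>c,\,S=0]
=\frac{\E[I(h(X,\widehat\beta)>c,\,Y=1)\mid S=0]}{\E[I(h(X,\widehat\beta)>c)\mid S=0]}.
\]
Then I would apply the law of iterated expectations to the numerator conditioning on $X$, pull the factor $I(h(X,\widehat\beta)>c)$ (deterministic given $X$) outside the inner conditional expectation, invoke A1 ($Y\indep R\mid X$) to replace $\E[I(Y=1)\mid X,S=0]$ by $\E[I(Y=1)\mid X,R=1]=\Pr[Y=1\mid X,R=1]$, and use A2 to ensure this conditional mean is well defined wherever $f(x\mid S=0)>0$. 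This gives the outcome-model functional~\eqref{id:out-ppv}.

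For the weighting representation I would transform the numerator of~\eqref{id:out-ppv} exactly as in the second half of the proof of Theorem~\ref{thm:id}: rewrite the conditional expectation given $S=0$ as $\Pr[S=0]^{-1}\E[I(S=0)(\cdots)]$, substitute $\Pr[Y=1\mid X,R=1]=\Pr[Y=1,R=1\mid X]/\Pr[R=1\mid X]$, condition on $X$, replace $\E[I(S=0)\mid X]$ by $\Pr[S=0\mid X]$, and collapse $\Pr[Y=1,R=1\mid X]\,I(h(X,\widehat\beta)>c)$ back into $\E[I(h(X,\widehat\beta)>c,\,Y=1,\,R=1)\mid X]$. Treating the denominator the same way, $\E[I(h(X,\widehat\beta)>c)\mid S=0]=\Pr[S=0]^{-1}\E[I(S=0)I(h(X,\widehat\beta)>c)]$, the factor $\Pr[S=0]^{-1}$ cancels in the ratio and~\eqref{id:ipw-ppv} follows (recalling $\Pr[S=0\mid X]=\Pr[R=0\mid X]$).

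The only step that will require any care is the one that swaps $S=0$ for $R=1$ in the conditioning set of the outcome regression, where A1 and A2 must both be used — A1 to equate the conditional outcome laws and A2 so that $\Pr[Y=1\mid X,R=1]$ is defined on the support of $f(x\mid S=0)$; everything else is routine iterated expectations. The contrast with the sensitivity proof is only that the denominator $\E[I(h(X,\widehat\beta)>c)\mid S=0]$ needs no exchangeability argument, since it depends solely on the covariate distribution in the target population.
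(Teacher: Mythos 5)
Your proposal is correct and follows essentially the same route as the paper's proof: write the PPV as the ratio $\E[I(h(X,\widehat\beta)>c,Y=1)\mid S=0]/\E[I(h(X,\widehat\beta)>c)\mid S=0]$, apply iterated expectations and A1 to swap $S=0$ for $R=1$ in the numerator's outcome regression, and for the weighting form substitute $\Pr[Y=1\mid X,R=1]=\Pr[Y=1,R=1\mid X]/\Pr[R=1\mid X]$ and recondition on $X$, with the $\Pr[S=0]^{-1}$ normalization cancelling in the ratio. Your observation that the denominator requires no exchangeability argument, since it is a functional of the target-population covariate distribution alone, matches exactly how the paper leaves the denominator untouched throughout.
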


\begin{proof}
For the outcome model representation
\begin{align*}
\E[Y=1|I(h(X, \widehat \beta) > c), S=0] &= \frac{\E[I(h(X, \widehat \beta) > c, Y = 1)|S=0]}{\E[I(h(X, \widehat \beta) > c)|S=0]} \\
&= \frac{\E[\E[I(h(X, \widehat \beta) > c, Y = 1)|X, S=0]|S=0]}{\E[I(h(X, \widehat \beta) > c)|S=0]} \\
&= \frac{\E[\E[I(h(X, \widehat \beta) > c, Y = 1)|X, R=1]|S=0]}{\E[I(h(X, \widehat \beta) > c)|S=0]} \\
&= \frac{\E[I(h(X, \widehat \beta) > c)\Pr[Y = 1|X, R=1]|S=0]}{\E[I(h(X, \widehat \beta) > c)|S=0]} 
\end{align*}
For the inverse-odds weighting we have\allowdisplaybreaks
\begin{align*}
\E[Y=1|I(h(X, \widehat \beta) > c), S=0] &= \frac{\E[I(h(X, \widehat \beta) > c, Y = 1)|S=0]}{\E[I(h(X, \widehat \beta) > c)|S=0]} \\
&= \frac{\E[\E[I(h(X, \widehat \beta) > c, Y = 1)|X,S=0]|S=0]}{\E[I(h(X, \widehat \beta) > c)|S=0]} \\
&= \frac{\E[I(S=0)I(h(X, \widehat \beta) > c) \Pr[Y = 1|X,S=0]]}{\E[I(S=0) I(h(X, \widehat \beta) > c)]} \\
&= \frac{\E[I(S=0)I(h(X, \widehat \beta) > c) \Pr[Y = 1|X,R=1]]}{\E[I(S=0)I(h(X, \widehat \beta) > c)]} \\
&= \frac{\E\left[I(S=0)I(h(X, \widehat \beta) > c) \frac{\Pr[Y = 1,R=1|X]}{\Pr[R=1|X]}\right]}{\E[I(S=0)I(h(X, \widehat \beta) > c)]} \\
&= \frac{\E\left[\E\left[I(S=0)I(h(X, \widehat \beta) > c) \frac{\Pr[Y = 1,R=1|X]}{\Pr[R=1|X]}\Big|X\right]\right]}{\E[I(S=0)I(h(X, \widehat \beta) > c)]} \\
&= \frac{\E\left[\E\left[\frac{\Pr[R=0|X]}{\Pr[R=1|X]}I(h(X, \widehat \beta) > c) \Pr[Y = 1,R=1|X]\Big|X\right]\right]}{\E[I(S=0)I(h(X, \widehat \beta) > c)]} \\
&= \frac{\E\left[\E\left[\frac{\Pr[R=0|X]}{\Pr[R=1|X]}I(h(X, \widehat \beta) > c, Y=1, R=1) \Big|X\right]\right]}{\E[I(S=0)I(h(X, \widehat \beta) > c)]} \\
&= \frac{\E\left[\frac{\Pr[S=0|X]}{\Pr[R=1|X]}I(h(X, \widehat \beta) > c, Y=1, R=1) \right]}{\E[I(S=0)I(h(X, \widehat \beta) > c)]} 
\end{align*}
\end{proof}
\noindent The following theorem gives the identifiability result for the negative predictive value and the proof is similar to the proof of Theorem \ref{thm:id-ppv}.
\begin{theorem}
\label{thm:id-npv}
If assumptions A1, A2 and that $\E[I(h(X, \widehat \beta) \leq  c)|S=0]>0$ hold, then the negative predictive value in the target population is identifiable using the observed data through the observed data functional 
\begin{equation}
\label{id:out-ppv}
\frac{\E[I(h(X, \widehat \beta) \leq c)\Pr[Y = 0|X, R=1]|S=0]}{\E[I(h(X, \widehat \beta) \leq c)|S=0]}
\end{equation}
or equivalently using the weighting representation
\begin{equation}
\label{id:ipw-ppv}
\frac{\E\left[\frac{\Pr[S=0|X]}{\Pr[R=1|X]}I(h(X, \widehat \beta) \leq c, Y=0, R=1) \right]}{\E[I(S=0)I(h(X, \widehat \beta) \leq c)]}.
\end{equation}
\end{theorem}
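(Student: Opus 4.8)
The plan is to mirror the proof of Theorem~\ref{thm:id-ppv}, replacing the event $\{h(X,\widehat\beta) > c\}$ by its complement $\{h(X,\widehat\beta) \le c\}$ and the outcome event $\{Y=1\}$ by $\{Y=0\}$. First I would write the negative predictive value in the target population as a ratio of two marginal quantities,
\[
\Pr[Y=0 \, | \, I(h(X,\widehat\beta)\le c), S=0] = \frac{\E[I(h(X,\widehat\beta)\le c, Y=0) \, | \, S=0]}{\E[I(h(X,\widehat\beta)\le c) \, | \, S=0]},
\]
which is well defined because $\E[I(h(X,\widehat\beta)\le c) \, | \, S=0] > 0$ by assumption.

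For the outcome-model representation, I would apply the law of total expectation to the numerator, conditioning on $X$, and then use that $I(h(X,\widehat\beta)\le c)$ is a fixed function of $X$ to factor it out of the inner conditional expectation, leaving $\E[\,I(h(X,\widehat\beta)\le c)\,\E[I(Y=0)|X,S=0]\,|\,S=0]$. At this point assumption A1, $Y \indep R | X$, lets me replace $\E[I(Y=0)|X,S=0]$ by $\Pr[Y=0|X,R=1]$: conditioning on $S=0$ is conditioning on $R=0$, and conditional exchangeability equates the within-$X$ law of $Y$ across values of $R$. This yields the stated outcome-model functional.

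For the weighting representation, I would rewrite the conditional target-population expectations as unconditional expectations weighted by $I(S=0)/\Pr[S=0]$, express $\Pr[Y=0|X,R=1]$ as $\Pr[Y=0,R=1|X]/\Pr[R=1|X]$, and use iterated expectations conditioning on $X$, so that $\E[I(S=0)|X] = \Pr[S=0|X] = \Pr[R=0|X]$ appears and combines with $1/\Pr[R=1|X]$ into the inverse-odds weight; positivity (A2) guarantees this weight is finite wherever $f(x|S=0) > 0$. Collapsing $\Pr[Y=0,R=1|X]$ back into the indicator $I(Y=0,R=1)$ and cancelling the common factor $\Pr[S=0]^{-1}$ between numerator and denominator produces the stated weighting functional; note that the denominator $\E[I(S=0)I(h(X,\widehat\beta)\le c)]$ requires no exchangeability argument since it does not involve $Y$.

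I do not expect a serious obstacle: the argument is the proof of Theorem~\ref{thm:id-ppv} with $\{Y=1\}$ swapped for $\{Y=0\}$ and $\{h(X,\widehat\beta)>c\}$ swapped for $\{h(X,\widehat\beta)\le c\}$. The only points needing a little care are bookkeeping in the weighting representation --- keeping the $\Pr[S=0]$ normalizing constants straight so that they cancel, and recognizing $\Pr[S=0|X] = \Pr[R=0|X]$ --- and checking that assumption A2 is precisely what makes the inverse-odds weight well defined on the support of the target population.
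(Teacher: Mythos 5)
Your proposal is correct and matches the paper's approach exactly: the paper gives no separate proof for the NPV result, stating only that it is analogous to the proof of Theorem~\ref{thm:id-ppv}, and your argument carries out precisely that substitution ($\{Y=1\}\mapsto\{Y=0\}$, $\{h(X,\widehat\beta)>c\}\mapsto\{h(X,\widehat\beta)\leq c\}$) with the correct handling of the denominator, which requires no exchangeability since it does not involve $Y$.
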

\subsection{Estimation}

For specificity the sample analogs of the identifiability result in Theorem \ref{thm:id-spec} give the outcome model estimator
\begin{equation}
\label{spec-out}
\widehat \psi_{spec, out} = \frac{\sum_{i=1}^n I(R_i = 0) I(h(X_i, \widehat \beta) \leq c) (1 - \widehat m(X_i))}{\sum_{i=1}^n I(R_i=0)  (1 - \widehat m(X_i))},
\end{equation}
and the weighting estimator
\begin{equation}
\label{spec-iw}
\widehat \psi_{spec, w} =\frac{\sum_{i=1}^n I(h(X_i, \widehat \beta) \leq c, Y_i=0, R_i =1) \widehat w(X_i)}{\sum_{i=1}^n I(Y_i=0, R_i =1) \widehat w(X_i)}.
\end{equation}
Lastly, the doubly robust estimator for specificity $\widehat \psi_{spec, dr}$ is equal to
\[
\frac{\sum_{i=1}^n \left(I(S_i=0) I(h(X_i, \widehat \beta) \leq c) (1 - \widehat m(X_i)) +  \widehat w(X_i) I(R_i=1) I(h(X_i, \widehat \beta) \leq c) \big\{ I(Y_i = 0) - (1-\widehat m(X_i))\big\}\right)}{\sum_{i=1}^n \left(I(S_i=0) (1 - \widehat m(X_i)) +  \widehat w(X_i) I(R_i=1) \big\{ I(Y_i = 0) - (1 - \widehat m(X_i)) \big\}\right)}
\]
Using the sample analogs of the identifiability expressions in Theorem \ref{thm:id-ppv} gives the outcome model estimator for the positive predictive value in the target population
\begin{equation}
\label{ppv-out}
\widehat \psi_{ppv, out} = \frac{\sum_{i=1}^n I(R_i = 0) I(h(X_i, \widehat \beta) > c) \widehat m(X_i)}{\sum_{i=1}^n I(R_i=0)  I(h(X_i, \widehat \beta) > c)},
\end{equation}
and the weighting estimator
\begin{equation}
\label{sens-w}
\widehat \psi_{ppv, w} =\frac{\sum_{i=1}^n I(h(X_i, \widehat \beta) > c, Y_i=1, R_i =1) \widehat w(X_i)}{\sum_{i=1}^n I(R_i =1) \widehat w(X_i) I(h(X_i, \widehat \beta) > c)}.
\end{equation}
Similar calculations as for the doubly robust estimators of sensitivity result in the doubly robust estimator of the positive predictive value $\widehat \psi_{ppv, dr}$ given by the formula
\[
\frac{\sum_{i=1}^n \left(I(S_i=0) I(h(X_i, \widehat \beta) > c) \widehat m(X_i) +  \widehat w(X_i) I(R_i=1) I(h(X_i, \widehat \beta) > c) \big\{ I(Y_i = 1) - \widehat m(X_i)\big\}\right)}{\sum_{i=1}^n I(R_i=0)  I(h(X_i, \widehat \beta) > c)}
\]
Similarly we have the outcome model estimator for the negative predictive value in the target population 
\begin{equation}
\label{npv-out}
\widehat \psi_{npv, out} = \frac{\sum_{i=1}^n I(R_i = 0) I(h(X_i, \widehat \beta) \leq c) (1 - \widehat m(X_i))}{\sum_{i=1}^n I(R_i=0)  I(h(X_i, \widehat \beta) \leq c)},
\end{equation}
and the weighting estimator
\begin{equation}
\label{spec-iw}
\widehat \psi_{npv, w} =\frac{\sum_{i=1}^n I(h(X_i, \widehat \beta) \leq c, Y_i=0, R_i =1) \widehat w(X_i)}{\sum_{i=1}^n I(R_i =1) \widehat w(X_i) I(h(X_i, \widehat \beta) \leq c)}.
\end{equation}
The doubly robust estimator for the negative predictive value $\widehat \psi_{npv, dr}$ is given by 
\[
\frac{\sum_{i=1}^n \left(I(S_i=0) I(h(X_i, \widehat \beta) \leq c) (1 - \widehat m(X_i)) +  \widehat w(X_i) I(R_i=1) I(h(X_i, \widehat \beta) \leq c) \big\{ I(Y_i = 0) - (1-\widehat m(X_i))\big\}\right)}{\sum_{i=1}^n I(R_i=0)  I(h(X_i, \widehat \beta) \leq c)}
\]
The doubly robust estimators of specificity, positive and negative predictive value are consistent if at least one (but not necessarily both) of $\widehat m(X)$ and $\widehat w(X)$ are correctly specified.

For loss-based measures of model performance, the target parameter is the risk (expected loss) in the target population
\[
\E[L(Y, h(X, \widehat \beta))|S=0],
\]
which is identifiable through the expression
\begin{align*}
\E[L(Y, h(X, \widehat \beta))|S=0] &= \E[\E[L(Y, h(X, \widehat \beta))|X, S=0]|S=0] \\
&= \E[\E[L(Y, h(X, \widehat \beta))|X, R=1]|S=0]
\end{align*}
and
\begin{align*}
\E[\E[L(Y, h(X, \widehat \beta))|X, R=1]|S=0] &= \frac{1}{\Pr[S=0]}\E\left[I(R=0)\E\left[\frac{I(R=1)}{\Pr[R=1|X]} L(Y, h(X, \widehat \beta))\big |X\right]\right] \\
&= \frac{1}{\Pr[S=0]}\E\left[\E\left[\frac{I(R=1) \Pr[R=0|X]}{\Pr[R=1|X]} L(Y, h(X, \widehat \beta))\big |X\right]\right] \\
&= \frac{1}{\Pr[S=0]}\E\left[\frac{I(R=1) \Pr[R=0|X]}{\Pr[R=1|X]} L(Y, h(X, \widehat \beta))\right].
\end{align*}
The sample analogs of the identifibaiblity expressions above are the outcome model estimator
\[
\widehat \psi_{loss, out} = \frac{1}{n_0} \sum_{i=1}^n I(S_i = 0) \widehat \E[L(Y, h(X, \widehat \beta))|X_i, R_i=1]
\] 
where $\widehat \E[L(Y, h(X, \widehat \beta))|X, R=1]$ is an estimator for $\E[L(Y, h(X, \widehat \beta))|X, R=1]$. We also have the weighting estimator
\[
\widehat \psi_{loss, w} = \frac{1}{n_0} \sum_{i=1}^n I(R_i = 1) \widehat w(X _i) L(Y_i, h(X_i, \widehat \beta)).
\]
Following the arguments in \cite{morrison2023robust}, the doubly robust estimator is defined as
\[
\frac{1}{n_0} \sum_{i=1}^n \left[ I(R_i = 1) \widehat w(X _i) \left( L(Y, h(X, \widehat \beta)) - \widehat \E[L(Y, h(X, \widehat \beta))|X_i, R=1] \right) + I(S_i = 0) \widehat \E[L(Y, h(X, \widehat \beta))|X_i, R=1]\right].
\]

\section{Additional results under the exponential tilt model for sensitivity}
\label{app:sens}

Under the exponential tilt model we have 
\begin{equation*}
  \begin{split}
  f_{Y | X, S}( y | x, s = 0) 
      &= \dfrac{ e^{\gamma_{s'} y} f_{Y | X, S}(y| x, s = s') }{ \E [ e^{\gamma_{s'} y} | X = x, S = s']}.
  \end{split}
\end{equation*}
which implies that for a function $g(Y,X)$ 
\begin{align*}
\E[g(Y,X)|X, S=0] &= \int g(y,X) df(y|X,S=0) \\
& = \dfrac{ \int g(y,X) e^{\gamma_{s'} } df_{Y | X, S}(y| x, s = s') }{ \E [ e^{\gamma_{s'} Y} | X = x, S = s']} \\
& = \frac{\E[ g(Y,X) e^{\gamma_{s'} Y}|X, S=s']}{\E[e^{\gamma_{s'} Y}|X, S=s']}
\end{align*}
Using this result we have 
\[
\E[\E[I(h(X, \widehat \beta) > c, Y=1)|X, S=0]|S=0] = \E \left[\frac{\E[ I(h(X, \widehat \beta) > c, Y=1) e^{\gamma_{s'} Y}|X, S=s']}{\E[e^{\gamma_{s'} Y}|X, S=s']} \Bigg|S=0\right]
\]
and
\[
\E[\E[I(Y=1)|X, S=0]|S=0] = \E \left[\frac{\E[ I(Y=1) e^{\gamma_{s'} q(Y)}|X, S=s']}{\E[e^{\gamma_{s'} Y}|X, S=s']} \Bigg|S=0\right]
\]
Using this results the sensitivity in the target population under the exponential tilt model can be identified using 
\begin{align*}
\E[I(h(X, \widehat \beta) > c)|Y = 1, S=0] &= \frac{\E[I(h(X, \widehat \beta) > c, Y = 1)|S=0]}{\E[I(Y=1)|S=0]} \\
&= \frac{\E[\E[I(h(X, \widehat \beta) > c, Y = 1)|X, S=0]|S=0]}{\E[\E[I(Y=1)|X,S=0]|S=0]} \\
&= \frac{\E[\E[I(h(X, \widehat \beta) > c, Y = 1)|X, S=0]|S=0]}{\E[\E[I(Y=1)|X,R=1]|S=0]} \\
&= \frac{\E \left[\frac{\E[ I(h(X, \widehat \beta) > c, Y=1) e^{\gamma_{s'} Y}|X, S=s']}{\E[e^{\gamma_{s'} Y}|X, S=s']} \Bigg|S=0\right]}{\E \left[\frac{\E[ I(Y=1) e^{\gamma_{s'} Y}|X, S=s']}{\E[e^{\gamma_{s'} Y}|X, S=s']} \Bigg|S=0\right]}
\end{align*}

\section{Additional simulation results}
\label{app-sim}

Figure \ref{fig:cor-spec} show results from simulations were the model is correct and the relative bias $<2\%$ for both outcome model and weighting estimator and the estimator calculated using the data from the source population is biased ($4\% - 163\%$ relative bias).

Figures \ref{fig:gam-cor-spec} and \ref{fig:gam-mis-spec} shows the results when the outcome model and model for study participation are estimated using a generalized additive model when the model being estimated is correctly specified and when it is misspecified, respectively. The generalized additive models where fit using splines for the main effect of each covariate. The results are similar to when generalized linear models are used to estimate the outcome model and the model for study participation.

\begin{figure}[htbp]
    \centering
    \includegraphics[width = \textwidth]{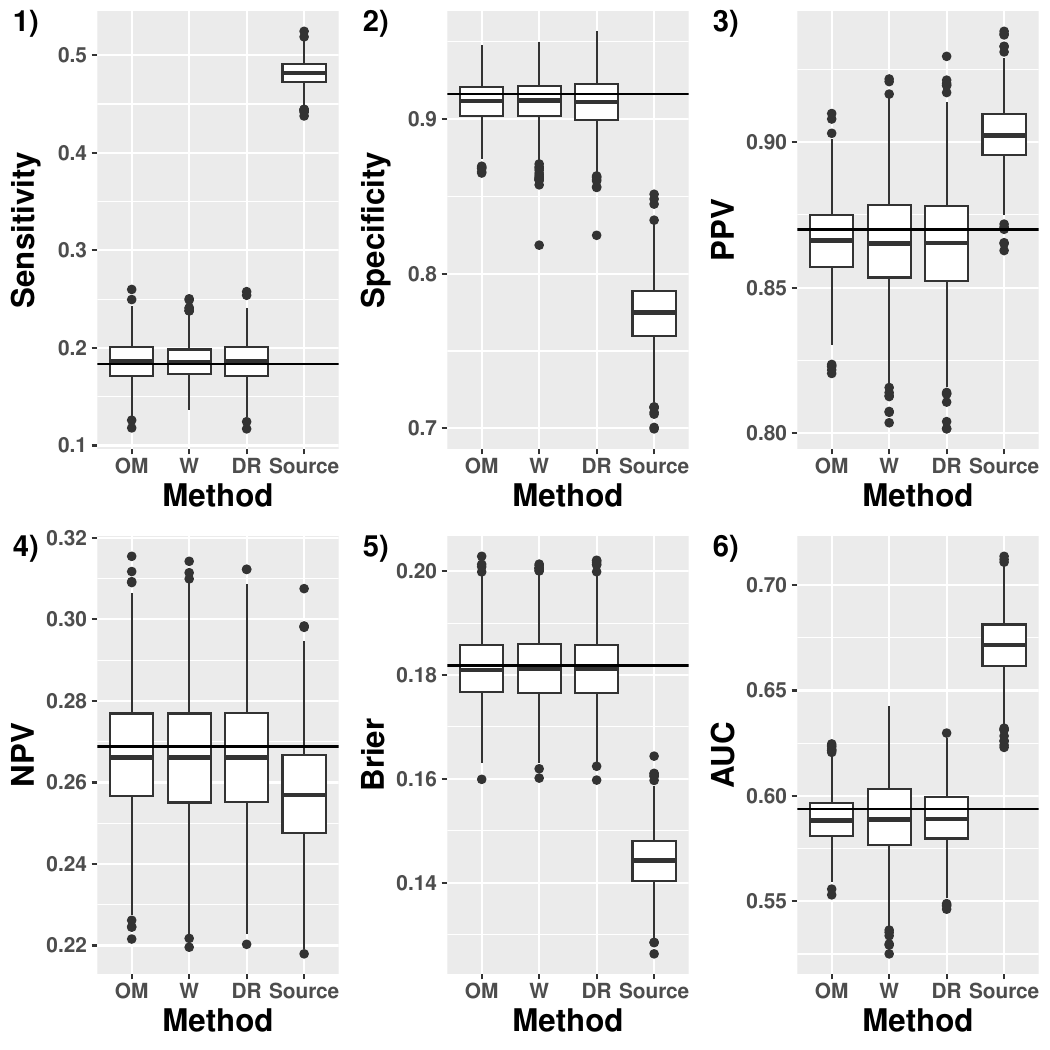}
    \caption{\label{fig:cor-spec} Simulation results for estimating model performance in the target population estimated using the outcome model (OM) estimator, weighting (W) estimator, doubly robust (DR) and an estimator that uses only data from the collection of studies (Source). The measures of model performance evaluated are: sensitivity (Plot 1), specificity (Plot 2), positive predictive value (Plot 3), negative predictive value (Plot 4), Brier score (Plot 5), and area under the curve (Plot 6). The horizontal line is the true measure of model performance (calculated analytically). Here, the model that is evaluated is correctly specified.}
\end{figure}

\begin{figure}[htbp]
    \centering
    \includegraphics[width = \textwidth]{GAMSimulations.pdf}
    \caption{\label{fig:gam-cor-spec} Simulation results for estimating model performance in the target population estimated using the outcome model estimator, weighting estimator, and an estimator that uses only data from the collection of studies. For both the outcome model and the model for study participation are estimated using a generalized additive model. The measures of model performance evaluated are: sensitivity (Plot 1), specificity (Plot 2), positive predictive value (Plot 3), negative predictive value (Plot 4), Brier score (Plot 5), and area under the curve (Plot 6). The horizontal line is the true measure of model performance (calculated analytically). Here, the model that is evaluated is correctly specified.}
\end{figure}

\begin{figure}[htbp]
    \centering
    \includegraphics[width = \textwidth]{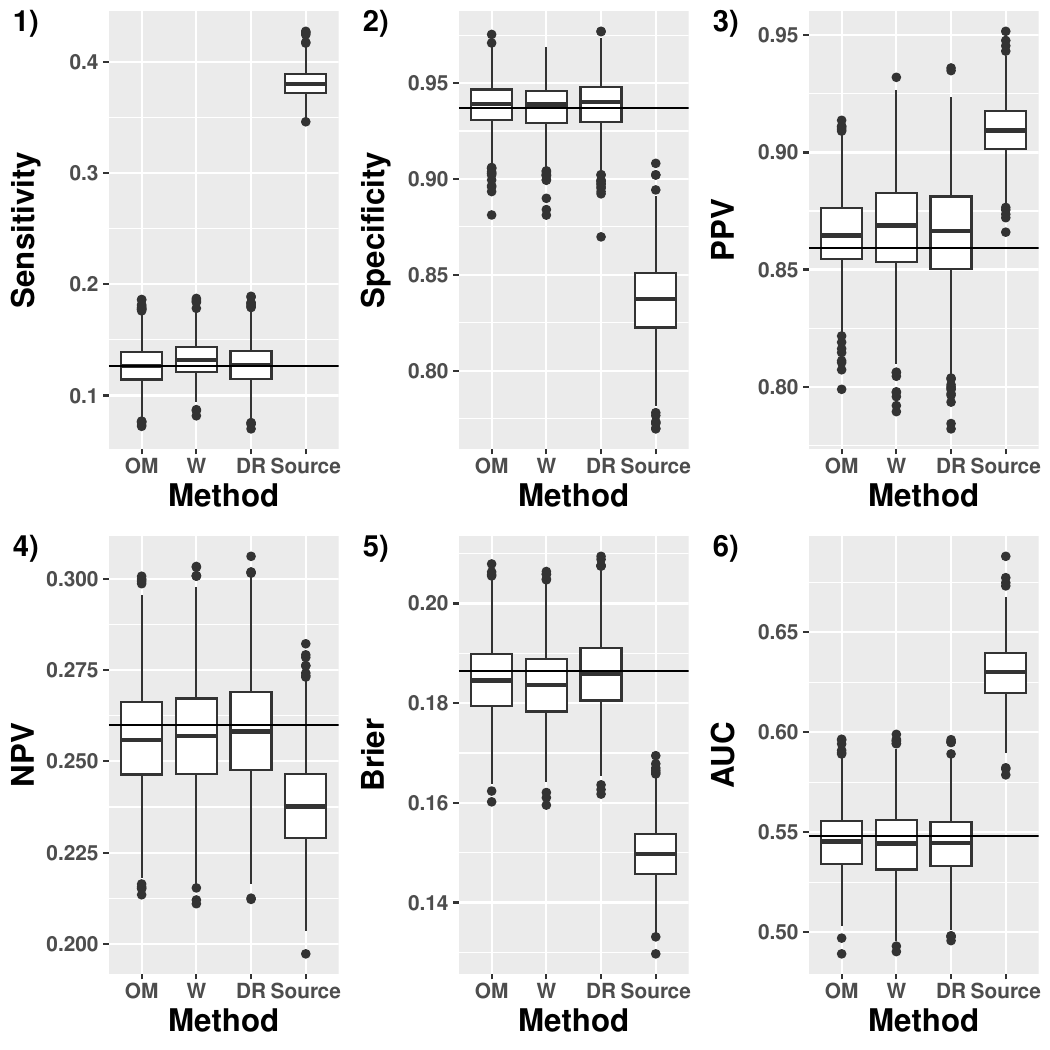}
    \caption{\label{fig:gam-mis-spec} Simulation results for estimating model performance in the target population estimated using the outcome model estimator, weighting estimator, and an estimator that uses only data from the collection of studies. For both the outcome model and the model for study participation are estimated using a generalized additive model. The measures of model performance evaluated are: sensitivity (Plot 1), specificity (Plot 2), positive predictive value (Plot 3), negative predictive value (Plot 4), Brier score (Plot 5), and area under the curve (Plot 6). The horizontal line is the true measure of model performance (calculated analytically). Here, the model that is evaluated is misspecified.}
\end{figure}

\section{Accounting for survey sampling weights in the sample from the target population}
\label{sec:weights}

In this section we show how the estimators for model performance in the target population can be extended to sampling designs where each observation in the target population is associated with a survey sampling weight $w$. 

The modified outcome model estimator for sensitivity in the target population is
\begin{equation*}
\frac{\sum_{i=1}^n I(R_i = 0) w_i I(h(X_i, \widehat \beta) > c) \widehat m(X_i)}{\sum_{i=1}^n I(R_i=0) w_i  \widehat m(X_i)}.
\end{equation*}
As $\widehat m(X)$ is an estimator for $\Pr[Y = 1|X, R=1]$ does not involve target population data it does not need to be modified.

The weighting estimator is given by
\begin{equation*}
\frac{\sum_{i=1}^n I(h(X_i, \widehat \beta) > c, Y_i=1, R_i =1) \tilde  w(X_i)}{\sum_{i=1}^n I(Y_i=1, R_i =1) \tilde w(X_i)},
\end{equation*}
where $\tilde  w(X)$ is an estimator for $\frac{\Pr[R=0|X]}{\Pr[R=1|X]}$ that incorporates the sampling weights (e.g., through a weighted logistic regression). The modified doubly robust estimator is 
\[
\frac{\sum_{i=1}^n \left(w_i I(S_i=0) I(h(X_i, \widehat \beta) > c) \widehat m(X_i) +  \tilde w(X_i) I(R_i=1) I(h(X_i, \widehat \beta) > c) \big\{ I(Y_i = 1) - \widehat m(X_i)\big\}\right)}{\sum_{i=1}^n \left(w_i I(S_i=0) \widehat m(X_i) +  \tilde w(X_i) I(R_i=1) \big\{ I(Y_i = 1) - \widehat m(X_i)\big\}\right)}.
\]
For specificity the modified outcome model estimator is
\begin{equation*}
\frac{\sum_{i=1}^n I(R_i = 0) w_i I(h(X_i, \widehat \beta) \leq c) (1 - \widehat m(X_i))}{\sum_{i=1}^n I(R_i=0) w_i  (1 - \widehat m(X_i))},
\end{equation*}
and the modified weighting estimator is given by
\begin{equation*}
\frac{\sum_{i=1}^n I(h(X_i, \widehat \beta) \leq c, Y_i=0, R_i =1) \tilde w(X_i)}{\sum_{i=1}^n I(Y_i=1, R_i =1) \tilde w(X_i)}.
\end{equation*}
The modified doubly robust estimator for specificity is given by
\[
\frac{\sum_{i=1}^n \left(w_i I(S_i=0) I(h(X_i, \widehat \beta) \leq c) (1 - \widehat m(X_i)) +  \tilde w(X_i) I(R_i=1) I(h(X_i, \widehat \beta) \leq c) \big\{ I(Y_i = 0) - (1-\widehat m(X_i))\big\}\right)}{\sum_{i=1}^n \left(w_i I(S_i=0) (1 - \widehat m(X_i)) +  \tilde w(X_i) I(R_i=1) \big\{ I(Y_i = 0) - (1 - \widehat m(X_i)) \big\}\right)}
\]
The modified outcome model estimator for the area under the curve in the target population is given by
\begin{equation*}
\frac{ \sum_{i \neq j} w_i w_j \widehat m(X_i) (1 -\widehat  m(X_j))   I(h( X_i,  \widehat \beta) > h( X_j, \widehat  \beta), S_i=0, S_j=0)}{\sum_{i \neq j} w_i w_j  \widehat m(X_i) (1 -\widehat  m(X_j)) I(S_i = 0, S_j = 0) },
\end{equation*}
and the modified weighting-based estimator for the AUC in the target population is given by
\begin{equation*}
\begin{split}
& \frac{\sum_{i\neq j} \tilde w(X_i) \tilde w(X_j) I(h( X_i, \widehat \beta) > h( X_j, \widehat \beta), Y_i=1, Y_j=0, R_i=1, R_j=1)}{ \sum_{i \neq j} \tilde w(X_i)\tilde w(X_j) I(Y_i=1, Y_j=0, R_i=1, R_j=1)}.
 \end{split}
\end{equation*}
and the modified doubly robust estimator is given by
\begin{align*}
d^{dr,w}(O_i, O_j; k(X_i,X_j)) &= d^{w,w}(O_i, O_j; k(X_i,X_j)) +  d^{out,w} (O_i, O_j; k(X_i,X_j)) -\\
&\tilde w(X_i) \widehat w(X_j) \widehat m(X_i) (1 - \widehat m(X_j))   I(R_i=1, R_j=1) k(X_i,X_j).
\end{align*}
where $d^{w,w}(O_i, O_j; k(X_i,X_j))$ and $d^{out,w} (O_i, O_j; k(X_i,X_j))$ are weighted versions of $d^{w}(O_i, O_j; k(X_i,X_j))$ and $d^{out} (O_i, O_j; k(X_i,X_j))$.
The modified outcome model estimator for the positive predictive value in the target population is given by
\begin{equation*}
\frac{\sum_{i=1}^n I(R_i = 0) w_i I(h(X_i, \widehat \beta) > c) \widehat m(X_i)}{\sum_{i=1}^n I(R_i=0) w_i I(h(X_i, \widehat \beta) > c)},
\end{equation*}
and the modified weighting estimator is
\begin{equation*}
\frac{\sum_{i=1}^n I(h(X_i, \widehat \beta) > c, Y_i=1, R_i =1) \tilde w(X_i)}{\sum_{i=1}^n I(R_i =1) \tilde w(X_i) I(h(X_i, \widehat \beta) > c)}.
\end{equation*}
The modified doubly robust estimator for the positive predictive value in the target population is
\[
\frac{\sum_{i=1}^n \left(w_i I(S_i=0) I(h(X_i, \widehat \beta) > c) \widehat m(X_i) +  \tilde w(X_i) I(R_i=1) I(h(X_i, \widehat \beta) > c) \big\{ I(Y_i = 1) - \widehat m(X_i)\big\}\right)}{\sum_{i=1}^n I(R_i=0) w_i I(h(X_i, \widehat \beta) > c)}
\]
The modified outcome model estimator for the negative predictive value in the target population is
\begin{equation*}
\frac{\sum_{i=1}^n w_i I(R_i = 0) I(h(X_i, \widehat \beta) \leq c) (1 - \widehat m(X_i))}{\sum_{i=1}^n w_i I(R_i=0)  I(h(X_i, \widehat \beta) \leq c)},
\end{equation*}
and the modified weighting estimator is given by
\begin{equation*}
\frac{\sum_{i=1}^n I(h(X_i, \widehat \beta) \leq c, Y_i=0, R_i =1) \tilde w(X_i)}{\sum_{i=1}^n I(R_i =1) \tilde w(X_i) I(h(X_i, \widehat \beta) \leq c)}.
\end{equation*}
The modified doubly robust estimator for the negative predictive value is
\[
\frac{\sum_{i=1}^n \left(w_i I(S_i=0) I(h(X_i, \widehat \beta) \leq c) (1 - \widehat m(X_i)) +  \tilde w(X_i) I(R_i=1) I(h(X_i, \widehat \beta) \leq c) \big\{ I(Y_i = 0) - (1-\widehat m(X_i))\big\}\right)}{\sum_{i=1}^n w_i I(R_i=0)  I(h(X_i, \widehat \beta) \leq c)}
\]
Lastly, the modified outcome model estimator for the Brier risk is given by
\[
\frac{1}{\sum_{i=1}^n I(R_i =0) w_i} \sum_{i=1}^n I(R_i = 0) w_i \widehat \E[L(Y, h(X, \widehat \beta))|X_i, R_i=1]
\] 
and the modified weighting estimator is given by
\[
\frac{1}{\sum_{i=1}^n I(R_i =0) w_i} \sum_{i=1}^n I(R_i = 1) \tilde w(X _i) L(Y_i, h(X_i, \widehat \beta)).
\]
The modified doubly robust estimator for the target population risk is given by 
\[
\frac{\sum_{i=1}^n \left[ I(R_i = 1) \tilde w(X _i) \left( L(Y, h(X, \widehat \beta)) - \widehat \E[L(Y, h(X, \widehat \beta))|X_i, R=1] \right) + w_iI(S_i = 0) \widehat \E[L(Y, h(X, \widehat \beta))|X_i, R=1]\right]}{\sum_{i=1}^n I(R_i =0) w_i}.
\]
\end{document}